
\documentclass[twocolumn,english,amsthm]{autartArxiv}   

\usepackage{graphicx}          
\usepackage{epstopdf}          

\usepackage{babel}
\usepackage{natbib}
\usepackage{amsmath,amssymb}
\usepackage{booktabs}
\usepackage{array}
\usepackage[shortcuts]{extdash}
\usepackage{float}

\allowdisplaybreaks

\DeclareMathOperator*{\argmin}{arg\;min}

\newtheorem{mydef}{Definition}
\newtheorem{mythm}{Theorem}
\newtheorem{assumption}{Assumption}

\setlength{\textfloatsep}{1em}		
\setlength{\dbltextfloatsep}{1em}	

\usepackage{xcolor}
\usepackage{ulem}
\newcommand{\Koen}[2][]{{\color{black}#2}}				
\newcommand{\Maarten}[2][]{{\color{black}#2}}		
%
\newcommand{\KoenSpellCheck}[2][]{{\color{black}#2}}
\newcommand{\M}[2][]{{\color{black}#2}}

\begin{document}

\begin{frontmatter}

\title{Identification of \Koen[Nonlinear Block-Oriented]{Block-Oriented Nonlinear} Systems starting from Linear Approximations: A Survey} 

\thanks[footnoteinfo]{The corresponding author is M.~Schoukens (maarten.schoukens@vub.ac.be).}
\thanks[preprint]{This paper is a postprint of a paper submitted to and accepted for publication in Automatica. This manuscript version is made available under the CC-BY-NC-ND 4.0 license. The published copy of record is available through: https://doi.org/10.1016/j.automatica.2017.06.044}
\author{Maarten Schoukens, Koen Tiels} 
\address{Vrije Universiteit Brussel, Dept. ELEC, Pleinlaan 2, B-1050 Brussels, Belgium}  

\begin{keyword}                           
System Identification, Maximum Likelihood, Nonlinear Systems, Hammerstein, Wiener, Wiener\-/Hammerstein, Hammerstein\-/Wiener, Parallel Cascade, Feedback, Best Linear Approximation, Linear Fractional Representation
\end{keyword}                             

\begin{abstract}                          
	Block-oriented \Maarten[]{nonlinear} models are popular in nonlinear \M{system identification} because of their advantages of being simple to understand and easy to use. Many different identification approaches were developed over the years to estimate the parameters of a wide range of block-oriented \Maarten[]{nonlinear} models. One class of these approaches uses linear approximations to initialize the identification algorithm. The best linear approximation framework and the $\epsilon$-approximation framework, or equivalent frameworks, allow the user to extract important information about the system, guide the user in selecting good candidate model structures and orders, and \M[they]{} prove to be a good starting point for nonlinear system identification algorithms. This paper gives an overview of the different block-oriented \Maarten[]{nonlinear} models that can be \M[modeled]{identified} using linear approximations, and of the identification algorithms that have been developed in the past. A non-exhaustive overview of the most important other block-oriented \Koen{nonlinear} system identification approaches is also provided throughout this paper.
\end{abstract}
\end{frontmatter}

\section{Introduction}
	
	Nonlinear models are \KoenSpellCheck[often used these days]{often used} to obtain a better insight into the behavior of the system under test, to compensate for a potential nonlinear behavior using predistortion techniques, or to improve plant control performance. Popular nonlinear model structures are, amongst others, nonlinear state space models \citep{Suykens1995,Paduart2008,Schon2011}, NARX and NARMAX models \citep{Leontaritis1985,Billings2013}, and block-oriented \Maarten[]{nonlinear} models \citep{Giri2010}. Due to the separation of the nonlinear dynamic behavior into linear time invariant (LTI) dynamics and the static nonlinearities (SNL), block\-/oriented nonlinear models are quite simple to understand and easy to use.
	
	This paper provides a survey of the identification of block-oriented \Maarten[]{nonlinear} models using input-output data only. The identification algorithms that are discussed in this paper are all initialized using linear approximation frameworks for nonlinear systems: the Best Linear Approximation \Maarten[]{(BLA)} \citep{Enqvist2005,Enqvist2005a,Pintelon2012} and the $\epsilon$-approximation \citep{Schoukens2015} framework. An overview on the practical use of the BLA is provided in \citep{Schoukens2016}, however, \Maarten[]{\KoenSpellCheck[this]{that} paper does not address the use of the BLA for block-oriented \Koen{nonlinear} modeling purposes.}  Many other block-oriented identification approaches, different from the linear approximation approaches, are described in the literature (e.g. overparametrization, iterative, or kernel-based methods). A non-exhaustive overview of the most important approaches for each model structure is given at the beginning of each section discussing that particular model structure. The presented identification approaches and the considered block-oriented structures go beyond the methods and structures that \KoenSpellCheck[are presented in previous surveys, such as \citep{Billings1980,Hasiewicz2010} and \citep{Giri2010}]{previous surveys, such as \citep{Billings1980,Hasiewicz2010,Giri2010}, present}.
	
	Block-oriented identification approaches starting from linear approximations have the advantage \KoenSpellCheck[that]{of addressing} part of the model selection problem \KoenSpellCheck[is done]{} in an LTI-framework, which is better understood by many practitioners. These approaches \Maarten[prove to obtain]{obtain} good results in various benchmark problems (such as the Wiener-Hammerstein \citep{Schoukens2009a} and Silverbox benchmarks \citep{Wigren2013}), as illustrated in Section \ref{sec:Benchmark} of this paper. Linear models of nonlinear systems can also be used to detect the underlying block-oriented structure of the system under test. \KoenSpellCheck[This structure detection problem using linear approximations is discussed in \citep{Haber1990,Lauwers2008,Schoukens2015}, however, no identification approaches are presented there.]{\citet{Haber1990,Lauwers2008,Schoukens2015} discuss this structure detection problem using linear approximations, however, they do not present identification approaches there.} 
		
	\begin{figure*}
		\centering
			\includegraphics[width=1.8\columnwidth]{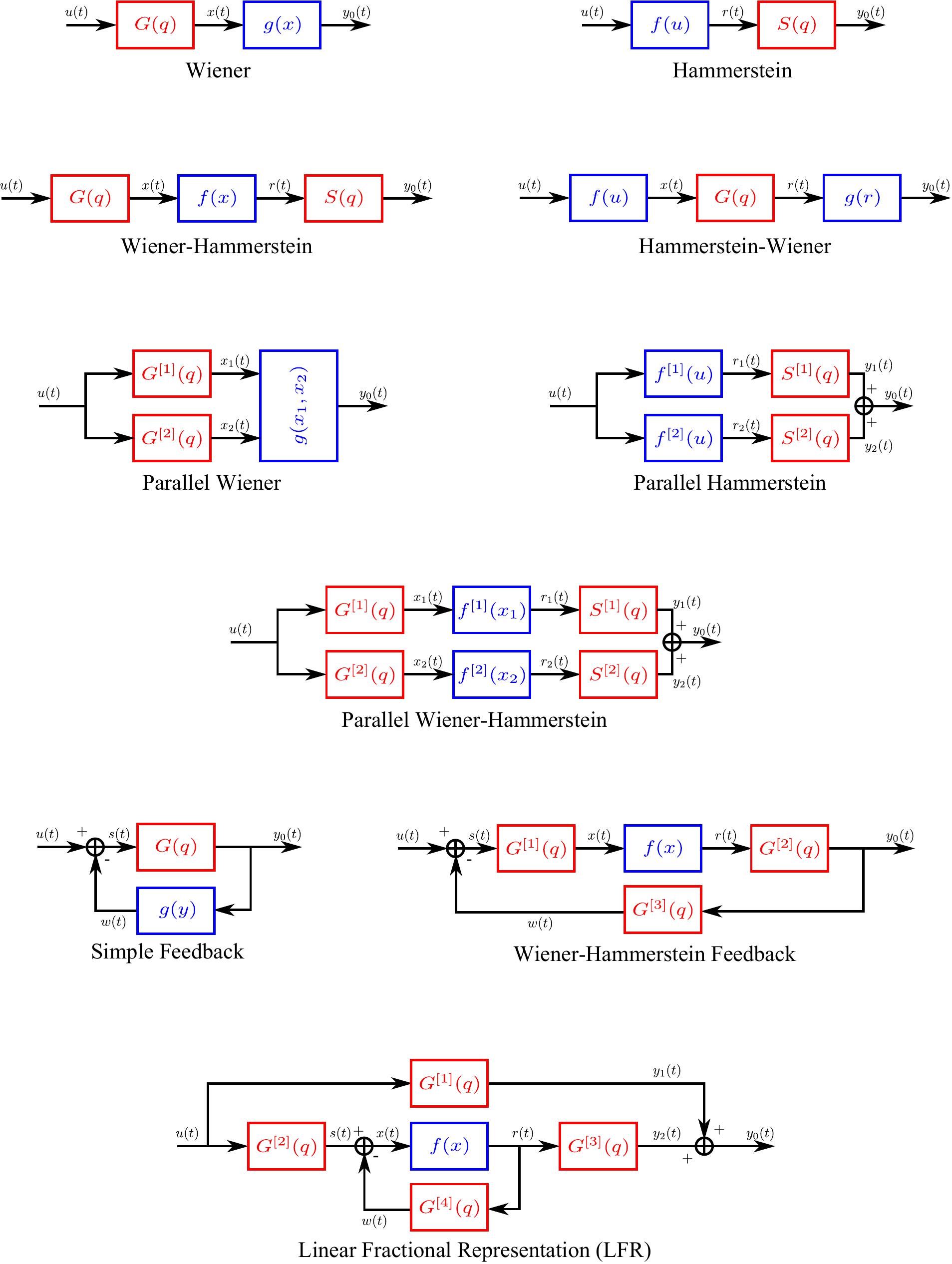}
		\caption{The block-oriented structures that are considered in this paper. The different structures have been obtained by using series, parallel and feedback connections of LTI blocks $G(q)$ and $S(q)$ and static nonlinear blocks $f(\cdot)$ and $g(\cdot)$. There are three types of structure classes: single branch structures (Wiener, Hammerstein, Wiener-Hammerstein and Hammerstein-Wiener), parallel branch structures (parallel Wiener, parallel Hammerstein, parallel Wiener-Hammerstein) and feedback structures (simple feedback structure, Wiener-Hammerstein feedback and LFR).}
		\label{fig:Structures}
	\end{figure*}
	
	\M[This paper introduces first]{} The different block-oriented structures that are \M[considered]{studied} throughout this paper \KoenSpellCheck{are introduced} in Section~\ref{sec:Structures}. Bussgang's theorem is very useful \KoenSpellCheck[to determine]{for determining} the BLA of many \M[of the considered]{} block-oriented structures. This theorem is discussed in Section~\ref{sec:Bussgang}. The considered linear approximation frameworks are discussed next in Section~\ref{sec:BLA}. Section~\ref{sec:ML} \M[discusses the maximum likelihood nature of]{introduces the cost function that is considered throughout this paper.} The identification algorithms for the single branch models (Hammerstein, Wiener, Hammerstein-Wiener and Wiener-Hammerstein) are studied in Section~\ref{sec:Single}. The parallel block-oriented \Koen{nonlinear} modeling approaches are discussed in Section~\ref{sec:Parallel}. Section~\ref{sec:Feedback} considers the identification of different nonlinear feedback models. Section~\ref{sec:Overview} provides an overview of the pros and the cons of the model structures and their identification based on the BLA\M[, as explained in Sections~\ref{sec:Single}--\ref{sec:Feedback}]{}. Some guidelines for the user are presented in Section~\ref{sec:Guidelines}. The good performance of the identification methods \M[that are]{} discussed in this paper is illustrated by some benchmark results in Section~\ref{sec:Benchmark}. Finally, some open research problems are highlighted in Section~\ref{sec:futureWork}.

\section{Block-Oriented \Maarten[]{Nonlinear} Models: Structures, Representation and Noise Framework} \label{sec:Structures}
		
	\subsection*{Model Structures}
		
	Block-oriented \Maarten[]{nonlinear} models are constructed starting from two basic building blocks: a linear time-invariant (LTI) block and a static nonlinear block. They can be combined in many different ways. Series, parallel and feedback connections are considered in this paper, resulting in a wide variety of block-oriented structures as is depicted in Figure~\ref{fig:Structures}. These block-oriented \Maarten[]{nonlinear} models are only a selection of the many different possibilities that one could think of. \KoenSpellCheck[For instance the generalized Hammerstein-Wiener structure that is discussed in \citep{Wills2012} and the Hammerstein/nonlinear feedback model structure that is discussed in \citep{Pelt2001} are not considered in this paper.]{For instance, this paper does not consider the generalized Hammerstein-Wiener structure that \citet{Wills2012} discuss, nor the Hammerstein/nonlinear feedback model structure that \citet{Pelt2001} discuss.} A comparison of the model complexity and the identification procedures of the considered model structures is made in Section~\ref{sec:Overview}.
	
	\subsection*{Model Representation}
	
	The LTI blocks and the static nonlinear blocks can be represented in many different ways. The LTI block could for instance be modeled as a nonparametric \Maarten[FRF]{frequency response function} \citep{Giri2013} or an impulse response \citep{Lacy2003}, or they can be parametrized using a state space \citep{Verhaegen1996}, rational transfer function \citep{Narendra1966}, or basis function expansion \citep{Tiels2014b}. The static nonlinear block can again be represented in a nonparametric way using, for instance, kernel-based methods \citep{Mzyk2014}, or in a parametric way using, for instance, a \KoenSpellCheck[linear-in-the-parameters]{linear in the parameters} basis function expansion (polynomial, piecewise linear, radial basis function network, ...) \citep{SchoukensM2014}, neural networks \citep{SchoukensM2015a}, or other dedicated \KoenSpellCheck[parameterizations]{parametrizations} for static nonlinear functions.
	
	The methods that are presented in this paper will typically use, but are not limited to, a rational transfer function representation for the LTI blocks and a \KoenSpellCheck[linear-in-the-parameters]{linear in the parameters} basis function expansion for the static nonlinearity.

	Another issue of block-oriented \Maarten[]{nonlinear} models is the uniqueness of the model parametrization. Gain exchanges, delay exchanges and equivalence transformations are present in many block-oriented structures \citep{SchoukensM2015a,SchoukensM2015c}. This results in many different models with the same input-output behavior, but with a different parametrization. The Jacobian of the cost function can be rank deficient due to these indistinguishability issues that are common for block-oriented identification problems. The related degenerations in the Jacobian need to be taken into account during the optimization.
	
	The indistinguishability issues do not only impact \M{on} the parametrization of the chosen model structure, \M[it is]{they} also \M[present between]{affect the} model structures \M{themselves}. The simple feedback structure (see Figure~\ref{fig:Structures}) can both be represented using an LTI block in the forward path and a static nonlinearity in the feedback path\M{,} or the other way around, a static nonlinearity in the forward path and an LTI block in the feedback path as is depicted in Figure~\ref{fig:IndistFB}. Both \M[representations]{model structures} lead to the same input-output representation, but are structurally different \citep{Schoukens2008}. Indeed\KoenSpellCheck{,} we have that the outputs of the two possible simple feedback structures are given by:
	\begin{align}\begin{cases}
		y(t) = G(q)\left[ u(t) - g(y(t)) \right], \\
		\bar{y}(t) = \bar{g}\left( u(t) - \bar{G}(q)\bar{y}(t) \right).
	\end{cases}\end{align}
	\Maarten[which can be rewritten as:
	\begin{align}\begin{cases}
		u(t) = G^{-1}(q)y(t) + g(y(t)), \\
		u(t) = \bar{g}^{-1}(\bar{y}(t)) + \bar{G}(q)\bar{y}(t).
	\end{cases}\end{align}]{}
	Both representations should have the same input-output \M{relationship}, which results in the following \Maarten[system of equations]{equation}:
	\Maarten[\begin{align}\begin{cases}
		y(t) = \bar{y}(t), \\
		G^{-1}(q)y(t) + g(y(t)) = \bar{g}^{-1}(y(t)) + \bar{G}(q)y(t).
	\end{cases}\end{align}]{\begin{align} G^{-1}(q)y(t) + g(y(t)) = \bar{g}^{-1}(y(t)) + \bar{G}(q)y(t).
    \end{align}}
	It can be easily observed that this \M{equation} holds for $\bar{G}(q) = G^{-1}(q)$ and $\bar{g}(.) = g^{-1}(.)$. \Maarten{It is assumed here that both $G(q)$ and $g(y)$ are invertible. Note that the equivalence transformation can introduce an unstable (when $G(q)$ is not minimal-phase) and non-causal LTI block (when $G(q)$ contains a pure delay).}
	
	\begin{figure}
		\centering
			\includegraphics[width=0.95\columnwidth]{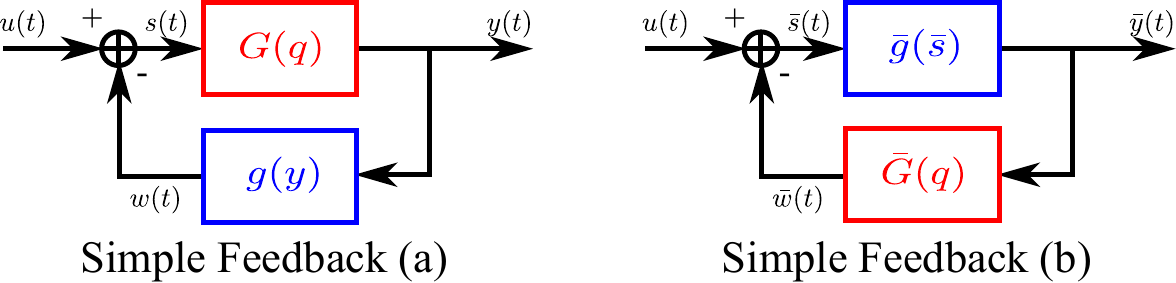}
		\caption{Two simple feedback block-oriented \Koen{nonlinear} model structures: (a) LTI block $G(q)$ in the forward path and a static nonlinearity $g(.)$ in the feedback path, (b) a static nonlinearity $\bar{g}(.)$ in the forward path and an LTI block $\bar{G}(q)$ in the feedback path. Both structures share the same input-output behavior when $\bar{g}(.) = g^{-1}(.)$ and $\bar{G}(q) = G^{-1}(q)$.}
		\label{fig:IndistFB}
	\end{figure}
	
	\subsection*{Noise Framework}
	
	\begin{assumption} \label{ass:Noise}
		\textbf{Noise framework:}
		A Gaussian additive, colored zero-mean noise source $n_y(t)$ with a finite variance $\sigma^2$ is present at the output of the system only:
		\begin{align}
			y(t) = y_0(t) + n_y(t). \label{eq:noise}
		\end{align}
		This noise $n_y(t)$ is assumed to be independent of the known input $u(t)$. The signal $y(t)$ is the actual output signal and a subscript $0$ denotes the exact (unknown) value.
	\end{assumption}
	
	The choice \KoenSpellCheck[for]{of} the noise framework that is described in Assumption~\ref{ass:Noise} is a simplified representation of reality. This simplification can lead to biased estimates when there are other noise sources present, located at other positions inside the system, e.g. process noise passing through a nonlinear subsystem \citep{Hagenblad2008}. A more realistic noise framework can be obtained by introducing multiple noise sources, or by placing the noise source at a different location in the considered system structure. \KoenSpellCheck[Multiple noise sources are, for instance, considered in \citep{Hagenblad2008,Wills2013,Lindsten2013,Wahlberg2014}.]{For instance, \citet{Hagenblad2008,Wills2013,Lindsten2013,Wahlberg2014} consider multiple noise sources.} This more realistic noise framework comes often at the cost of a more complex identification algorithm.

\section{Bussgang's Theorem and Separable Processes}	\label{sec:Bussgang}
	\subsection{Bussgang's Theorem}
		Bussgang's theorem \citep{Bussgang1952} is stated in \citep{Papoulis1991} as follows:
		\begin{mythm}
			If the input to a memoryless system $y=g(u)$ is a zero-mean stationary Gaussian input $u(t)$, the cross-correlation of $u(t)$ with the resulting output $y(t) = g(u(t))$ is proportional to $R_{uu}(\tau)$:
			\begin{align}
				R_{uy}(\tau) = \alpha R_{uu}(\tau) \quad \text{where} \quad \alpha = E\{g'(u(t))\}
			\end{align}
		\end{mythm}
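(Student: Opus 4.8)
The plan is to reduce the cross-correlation to an expectation over a single jointly Gaussian pair and then apply an integration-by-parts (Stein-type) identity. First I would fix $\tau$ and write $R_{uy}(\tau) = E\{u(t)\,g(u(t+\tau))\}$, and observe that because $u(t)$ is a zero-mean stationary Gaussian process, the pair $(X,Y) = (u(t), u(t+\tau))$ is jointly Gaussian with zero mean, common variance $R_{uu}(0)$, and covariance $R_{uu}(\tau)$. The entire problem then collapses to evaluating $E\{X\,g(Y)\}$ for a zero-mean jointly Gaussian pair, with no further reference to the time index.

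Next I would exploit the defining property of jointly Gaussian variables that the conditional expectation is linear, $E\{X \mid Y\} = \frac{R_{uu}(\tau)}{R_{uu}(0)}\,Y$. The tower rule then gives $E\{X\,g(Y)\} = E\{E\{X \mid Y\}\,g(Y)\} = \frac{R_{uu}(\tau)}{R_{uu}(0)}\,E\{Y\,g(Y)\}$, isolating a single-variable expectation. Equivalently, one can decompose $X = \frac{R_{uu}(\tau)}{R_{uu}(0)}\,Y + W$ with $W$ zero-mean and independent of $Y$, and note $E\{W\,g(Y)\} = E\{W\}\,E\{g(Y)\} = 0$; both routes lead to the same reduction.

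The univariate step is the core of the argument. For $Y \sim N(0,\sigma^2)$ with $\sigma^2 = R_{uu}(0)$ and density $p$, the key observation is the Gaussian identity $y\,p(y) = -\sigma^2 p'(y)$. Integrating by parts then yields $E\{Y\,g(Y)\} = \int y\,g(y)\,p(y)\,dy = \sigma^2 \int g'(y)\,p(y)\,dy = \sigma^2 E\{g'(Y)\}$. Substituting back gives $R_{uy}(\tau) = \frac{R_{uu}(\tau)}{R_{uu}(0)}\cdot R_{uu}(0)\,E\{g'(Y)\} = R_{uu}(\tau)\,E\{g'(u(t+\tau))\}$, and stationarity makes $E\{g'(u(t+\tau))\} = E\{g'(u(t))\} = \alpha$ independent of $\tau$, which is exactly the claimed $R_{uy}(\tau) = \alpha R_{uu}(\tau)$.

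The main obstacle is one of regularity rather than algebra: the integration by parts is legitimate only if $g$ is (almost everywhere) differentiable and the boundary term $g(y)\,p(y)$ vanishes as $y \to \pm\infty$, i.e. $g$ must not grow faster than the Gaussian tail decays and $g'$ must be integrable against $p$. I would therefore state these smoothness and growth conditions on $g$ explicitly at the outset; they are in any case implicitly required for $\alpha = E\{g'(u)\}$ to exist. A secondary point to verify is the degenerate case $R_{uu}(0) = 0$ (a deterministic input), which is excluded, while $\tau = 0$ reduces the identity directly to the univariate Stein lemma and is thus automatically consistent.
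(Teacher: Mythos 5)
Your proposal is correct, but note that the paper does not actually contain a proof of this theorem: its proof field simply reads ``See \cite{Bussgang1952,Papoulis1991}''. Your argument---fix $\tau$, reduce $R_{uy}(\tau)$ to $E\{X\,g(Y)\}$ for the zero-mean jointly Gaussian pair $(X,Y)=(u(t),u(t+\tau))$, invoke the linear-regression property $E\{X \mid Y\} = \bigl(R_{uu}(\tau)/R_{uu}(0)\bigr)\,Y$, and close with the univariate Gaussian integration-by-parts identity $E\{Y g(Y)\} = \sigma^2 E\{g'(Y)\}$---is essentially the classical argument contained in those references, so you have reconstructed the standard proof in a self-contained way rather than found a genuinely new route. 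Two points of comparison are still worth making. First, your explicit listing of regularity hypotheses (a.e.\ differentiability of $g$, growth slow enough that the boundary term $g(y)p(y)$ vanishes and $E\{g'(u)\}$ exists) is a real improvement over the bare theorem statement, which leaves these conditions implicit, and your handling of the degenerate case $R_{uu}(0)=0$ is appropriate. Second, the cited literature covers slightly more ground than your proof: Bussgang's theorem can be established with no differentiability assumption on $g$ at all (for instance via the Mehler/Hermite expansion of the bivariate Gaussian density, or simply by stopping at your intermediate identity), in which case the proportionality constant is written as $\alpha = E\{u\,g(u)\}/R_{uu}(0)$; this agrees with $E\{g'(u(t))\}$ precisely by your Stein step whenever $g$ is smooth enough. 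So your proof is valid as stated under the hypotheses you add, and those hypotheses are the natural price of expressing $\alpha$ through $g'$.
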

		\begin{proof}
			See \cite{Bussgang1952,Papoulis1991}.
		\end{proof}
				
		In the frequency domain this becomes:
		\begin{align}
			S_{YU}(e^{j\omega T_s}) &= \alpha S_{UU}(e^{j\omega T_s}),
		\end{align} 
		\Maarten[]{where $T_s$ is the sampling period, $S_{YU}$ and $S_{UU}$ denote the crosspower spectrum of $y(t)$ and $u(t)$ and autopower spectrum of $u(t)$\Koen{,} respectively.}
        
		Bussgang's theorem proves to be very valuable for the analysis of block-oriented structures using linear approximations \M{as is shown} in the remainder of this paper.
		
	\subsection{Riemann Equivalence Class of Asymptotically Normally Distributed Excitation Signals}	
		Here we consider the Riemann equivalence class of asymptotically normally distributed excitation signals $\mathbb{S}_{U}$ \citep{Schoukens2009,Pintelon2012}. This signal class \M[mostly]{mainly} contains Gaussian noise sequences, but considering the Riemann class allows one to extend \Maarten[$\mathbb{S}_{U}$]{the signal class} to also contain periodic signal sets, while Bussgang's Theorem still applies \citep{Pintelon2012}. 
		
		\begin{mydef} \label{def:Gaussian}
			\textbf{Riemann equivalence class of asymptotically normally distributed excitation signals $\mathbb{S}_{U}$.}
			Consider a signal $u(t)$ with a piecewise continuous power spectrum $S_{UU}(e^{j\omega T_s})$, with a finite number of discontinuities. A random signal $u(t)$ belongs to the Riemann equivalence class if it obeys any of the following statements:
			\begin{enumerate}
				\item $u(t)$ is a Gaussian noise signal with power spectrum $S_{UU}(e^{j\omega T_s})$.
				\item $u(t)$ is a random multisine or random phase multisine \citep{Pintelon2012} such that:
					\begin{align}
						\frac{1}{N} \sum_{k = k_1}^{k_2} E\left\{ \left| U\left( k \right) \right|^2 \right\} &= \frac{1}{2 \pi} \int_{\omega_{k_1}}^{\omega_{k_2}} S_{UU}\left( e^{j \nu T_s} \right) d\nu  \nonumber \\
																																																	& \; + O\left( N^{-1} \right), \nonumber
					\end{align}
					with $\omega_k = k \frac{2 \pi f_s}{N}$, $k \in \mathbb{N}$, $0<\omega_{k_1}<\omega_{k_2}<\pi f_s$, $f_s$ is the sampling frequency, and $U(k)$ is the discrete Fourier spectrum of $u(t)$. \Maarten[]{$E\left\{ . \right\}$ denotes the expected value operator.}
			\end{enumerate}		
		\end{mydef}
			
		A random phase multisine $u(t)$ is a periodic signal with period length $\frac{N}{f_s}$ defined in \citep{Pintelon2012} as: 
			\Maarten[\begin{align}
				u(t) &= N^{-1/2}\sum_{k=-N/2+1}^{N/2-1}U_{k}e^{j(2\pi k\frac{t}{N} f_s+\varphi_{k})},\quad k\neq0	\nonumber \\
			 			 &= N^{-1/2}\sum_{k=1}^{N/2-1}2U_{k}\cos(2\pi k\frac{t}{N} f_s+\varphi_{k}),
			\end{align}]{
            \begin{align}
				u(t) &= N^{-1/2}\sum_{k=1}^{N/2-1}2U_{k}\cos(2\pi k\frac{t}{N} f_s+\varphi_{k}).
			\end{align}
            }
		\Maarten[where $U_{-k}=U_{k}$ and $\varphi_{-k}=-\varphi_{k}$ to obtain a real-valued signal $u(t)$.]{} The phases $\varphi_{k}$ are random variables that are independent over the frequency and are a realization of a random process on $[0, 2\pi[$, such that $E\{e^{j\varphi_{k}}\}=0$.  For instance, the random phases can be uniformly distributed over the interval $[0, 2\pi[$. The (\Maarten[]{real-valued}) amplitude $U_{k}$ is set in a deterministic way by the user and is uniformly bounded by $M_U$ ($0 \leq U_k \leq M_U < \infty$). Random phase multisines have the advantage of being periodic signals. This avoids the adverse effects of spectral leakage for a proper choice of the period length $\frac{N}{f_s}$. They also offer full control over the applied amplitude spectrum to the user. The Riemann equivalence ensures that a random phase multisine is asymptotically ($N \rightarrow \infty$) Gaussian distributed \citep{Pintelon2012}.
	
	\subsection{Separable Processes}
		Bussgang's theorem has been extended to other classes of signals, besides the Gaussian class, using the concept of separable processes that is introduced in \citep{Nuttall1958}. Gaussian processes, sine wave and phase modulated signals are shown to be separable processes in \citep{Nuttall1958}. Furthermore, \citep{McGraw1968} have shown that the signals belonging to the class of elliptically symmetric distributions are separable. \Maarten[]{Also random phase multisines with flat amplitude spectra are proven to be separable in \citep{Enqvist2011}.} A more \M{in-depth} discussion of separable processes for nonlinear system identification can be found in \Koen{\citep{Billings1978b,Enqvist2005a,Enqvist2010}}.

\section{Linear Approximations of Nonlinear Systems} \label{sec:BLA}
	Most real-world systems do not behave completely linearly. Nevertheless, a linear model often explains a significant part of the behavior of a (weakly) nonlinear system. This approximate linear model also provides the user with a better insight into the behavior of the system under test. It motivates the use of a framework that approximates the behavior of a nonlinear system by a linear time invariant model under well-chosen system-specific boundary conditions. This paper uses the BLA framework \citep{Ljung2001,Enqvist2005,Enqvist2005a,Pintelon2012, Schoukens2016} and the $\epsilon$-approximation \citep{Schoukens2015} to estimate a linear approximation of a nonlinear system\Maarten[given a fixed set of such boundary conditions]. This section \KoenSpellCheck[gives a brief introduction to]{briefly introduces} the theoretical framework of the BLA and the $\epsilon$-approximation. 
	
	\subsection{System Class} \label{sec:SystemClass}
		Both the BLA and the $\epsilon$-approximation consider the PISPO (periodic in, \Maarten{the} same period out) system class. This class of nonlinear systems includes systems whose output can be approximated arbitrarily well in mean square sense by a \Koen{uniformly bounded} Volterra series (see \Koen{\citet{Schoukens1998,Schetzen2006,Pintelon2012}} for more details). \Koen{Systems belonging to this class respond to a periodic input with a periodic steady-state output with the same period \citep{Boyd1984}, hence the \KoenSpellCheck[name]{acronym} PISPO.} This system class is linked with the bounded-input bounded-output (BIBO) stability criterion \citep{SchoukensM2015c}. Nonlinear phenomena generating sub-harmonics such as chaos and bifurcations are excluded. Saturation, clipping, and dead zone nonlinearities are, for example, still included in this system class.
		
	\subsection{Best Linear Approximation}
		The theoretical framework of the BLA results in a model whose structure is shown in Figure~\ref{fig:BestLinearApprox}. It includes four components: one boundary condition \Maarten[]{being an} input excitation class $\mathbb{U}$, and three model constituents, namely a Linear Time Invariant (LTI) model labeled $G_{bla}(q)$, a perturbation noise source labeled $n_{y}(t)$, and a nonlinear distortion source labeled $\tilde{y}_{s}(t)$.
			
		\begin{figure}
			\centering
				\includegraphics[width=0.95\columnwidth]{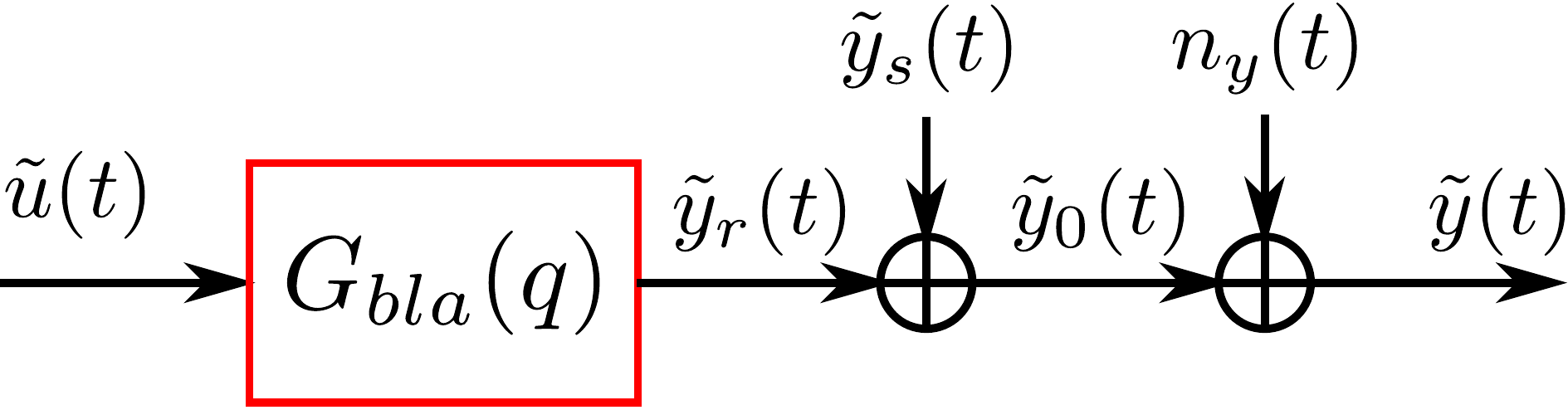}
			\caption{The BLA of a nonlinear system for a given class of excitation signals $\mathbb{U}$ consists of the resulting LTI model $G_{bla}(q)$, the unmodeled nonlinear contributions $\tilde{y}_{s}(t)$, and the additive noise source $n_{y}(t)$ that is assumed to be present at the output of the nonlinear system. The zero-mean input excitation $\tilde{u}(t)$ belongs to the signal class $\mathbb{U}$, \Koen{$\tilde{u}(t)$ and $\tilde{y}(t)$ are defined in eqs.~\eqref{eq:uTilde} and~\eqref{eq:yTilde}}.}
			\label{fig:BestLinearApprox}
		\end{figure}
		
		As mentioned above, the BLA of a system depends on the chosen signal class $\mathbb{U}$. Most identification methods that are discussed in this work consider $\mathbb{U}$ to be the Riemann equivalence class of asymptotically normally distributed excitation signals $\mathbb{S}_{U}$. When this class of signals is considered to provide an excitation signal, the BLA of many block-oriented \Koen{nonlinear} systems becomes, due to Bussgang's Theorem (see Section~\ref{sec:Bussgang}), a simple function of the linear dynamics that are present in that system. All signals are assumed to be stationary for the remainder of this work.
	
		The BLA model of a nonlinear system is an LTI approximation of the behavior of that system. It is best in mean square sense for a fixed class of input signals $\mathbb{U}$ only. The BLA is defined in \citep{Enqvist2005,Enqvist2005a,Pintelon2012} as:
			\begin{align}
		 	  G_{bla}(q) &= \underset{G(q)}{\argmin} \: E\left\{ \left| \tilde{y}(t) - G(q)\tilde{u}(t) \right|^{2} \right\}, \label{eq:BLA}
		 	\end{align}
	 	where $E\left\{.\right\}$ denotes the expected value operator. The expected value $E\left\{.\right\}$ is taken w.r.t. the random input $\tilde{u}(t)$ \Maarten[]{and the noise $n_y(t)$ acting on the output of the system}. The zero-mean signals $\tilde{u}(t)$ and $\tilde{y}(t)$ are defined as:
		 	\begin{align}
		 	  \tilde{u}(t)	&= u(t) - E\left\{ u(t) \right\}, \label{eq:uTilde} \\
		 	  \tilde{y}(t)	&= y(t) - E\left\{ y(t) \right\}. \label{eq:yTilde} 
			\end{align}
		This definition of the BLA is equal to the definition of the linear time invariant second order equivalent model defined in \citep{Ljung2001, Enqvist2005, Enqvist2005a}, when the stability and causality restrictions imposed there are omitted.
		
		It is shown in \citep{Enqvist2005, Enqvist2005a, Pintelon2012} that eq.~\eqref{eq:BLA} is minimized by:
		\begin{align}
			G_{bla}(e^{j\omega T_s}) = \frac{S_{\tilde{Y}\tilde{U}}(e^{j\omega T_s})}{S_{\tilde{U}\tilde{U}}(e^{j\omega T_s})}, \label{eq:BlaPower}
		\end{align}
		where $S_{\tilde{Y}\tilde{U}}$ and $S_{\tilde{U}\tilde{U}}$ denote the crosspower spectrum of $\tilde{y}(t)$ and $\tilde{u}(t)$ and autopower spectrum of $\tilde{u}(t)$ respectively.
		
		The output residuals $n_{t}(t)$ of the model are given by:
			\begin{align}
		 	  n_{t}(t) &= \tilde{y}(t) - G_{bla}(q)\tilde{u}(t)
			\end{align}
		These residuals represent the total distortion that is present at the output of the system. The total distortion can be split in two contributions based on their nature as is depicted in Figure~\ref{fig:BestLinearApprox}. The nonlinear distortion $\tilde{y}_{s}(t)$ represents the unmodeled nonlinear contributions, while the noise distortion $n_{y}(t)$ is the additive noise that is assumed to be present at the system output. The nonlinear distortion and the noise distortion can be calculated separately as:
			\begin{align}
		 	  \tilde{y}_{s}(t) &= \tilde{y}_{0}(t) - G_{bla}(q)\tilde{u}(t), \\
		 	  n_{y}(t) &= \tilde{y}(t) - \tilde{y}_{0}(t),
			\end{align}
		where $\tilde{y}_{0}(t)$ is the unknown noiseless output. The nonlinear distortion $\tilde{y}_{s}(t)$ is uncorrelated with the input $\tilde{u}(t)$, however, $\tilde{y}_{s}(t)$ is not independent of the input $\tilde{u}(t)$ \citep{Pintelon2012}. The noise source $n_{y}(t)$, on the contrary, is \Maarten[typically]{} assumed to be independent of the input $u(t)$.
		
		The obtained model $G_{bla}(e^{j\omega T_s})$ does not only depend on the nonlinear system, but also on the class $\mathbb{U}$ of input signals that is used. The class $\mathbb{U}$ fixes both the probability density function (e.g.\ Gaussian inputs, uniform input distribution, etc.) and the power spectrum (or power spectral density) of the signals that are used to estimate it. Note that the BLA is equal to zero for zero-mean Gaussian signals applied to an even static nonlinear function \citep{Bussgang1952}.
			
		The BLA can be estimated nonparametrically using the so-called robust and fast method \citep{Schoukens2005,Pintelon2012,Schoukens2012}. Afterwards, a parametric model can be estimated \M[on top of]{starting from} the nonparametric one, \M[where]{by using} the nonparametric disturbance model \M[is used]{} as a frequency weighting \citep{Pintelon2012}. One can also \Maarten[choose to estimate]{estimate} a parametric model directly \citep{Ljung1999,Enqvist2005a,Pintelon2012}. This requires the user to make a careful selection of the disturbance model that captures both the nonlinear and noise distortion contributions. \Maarten[]{Note that the estimation of a parametric model also results in a consistent estimate without the use of a disturbance model.}
	
	\subsection{$\epsilon$-Approximation}
	
			The Riemann equivalence class of asymptotically normally distributed excitation signals $\mathbb{S}_{U}$ is very useful to characterize a nonlinear system. \M[There are some drawbacks, however, with the $\mathbb{S}_{U}$ class]{However, the  $\mathbb{S}_{U}$ class has some drawbacks} when it comes to structure detection and system analysis of some \M[classes of systems]{model structures}. It is sometimes more convenient to consider infinitely small signals. For such excitation signals, the linearization of a cascade equals the cascade of the linearization. 
            
            \Maarten[It is clear that such a signal cannot be used in practice]{Measuring infinitely or very small signals in a noisy setting is often not feasible. However, in quite some cases where very high quality measurements can be obtained, e.g. during mechanical vibration tests or while measuring electronic circuits. In these cases very small signals can be applied to the system and can be measured with a sufficiently high signal-to-noise ratio. Deciding whether or not an input signal is small enough to act as an $\epsilon$-excitation is mostly left to the user and the prior knowledge that is available on the system under test. Alternatively, the BLA framework can be used to make this decision (the estimated nonlinear distortion level should be below the estimated noise floor).} Furthermore, the $\epsilon$-excitation class allows formalizing some system properties, especially for the Hammerstein-Wiener and feedback structures considered in this paper. 
			
			The class of $\epsilon$-excitations is defined in \citep{Schoukens2015} as follows:
			\begin{mydef}
				\textbf{Class of $\epsilon$-Excitations $\mathbb{S}_{\epsilon}$}. 
				The signal $u_{\epsilon}(t)$, $t = 1, \dots, N,$ belongs to the class of $\epsilon$-excitations $\mathbb{S}_{\epsilon}$, if it belongs to the class of Riemann equivalent excitation signals $\mathbb{S}_{U}$ and
				\begin{align}
					\sigma^{2}_{u} &= E\{\tilde{u}_\epsilon^2\} = \epsilon^2, \\
					\tilde{u}_\epsilon(t) &= u_\epsilon(t) - E\{u_\epsilon(t)\}.
				\end{align}
			\end{mydef}
			$\epsilon$-excitations are (asymptotically) normally distributed signals with a limited variance. In this paper, results are given for $\epsilon$ converging towards zero.
		
			\begin{mydef}
			 \textbf{$\epsilon$-Approximation:}
			  Consider the best linear approximation $G_{bla}$ obtained for a random excitation $u = u_{DC} + u_\epsilon$, and $u_\epsilon \: \in \: \mathbb{S}_{\epsilon}$, where $u_{DC}$ is a constant offset. Define:
			  \begin{align}
			  	\lim_{\epsilon \rightarrow 0} G_{bla}(e^{j\omega T_s})|_{u_\epsilon \: \in \: \mathbb{S}_{\epsilon}} =  G_{\epsilon}(e^{j\omega T_s}).
			  \end{align}
			\end{mydef}
			
			The definition of $G_{\epsilon}$ can be extended to deterministic signals as discussed and formalized in \citep{Makila2003}. The properties of the $\epsilon$-approximation are discussed in more detail in \citep{Schoukens2015}.
	
\section{\Maarten[Maximum Likelihood]{Cost Function}} \label{sec:ML}
	A least squares cost function is minimized to obtain the model parameters:
	\begin{align} \label{eq:OptimCost}
		\hat{\boldsymbol{\theta}} = \underset{\boldsymbol{\theta}}{\argmin} \: \sum_{t=1}^N \left( y(t)-\hat{y}(t,\boldsymbol{\theta}) \right)^{2},
	\end{align}
	where $\hat{y}(t,\boldsymbol{\theta})$ is the modeled output, depending on the parameters $\boldsymbol{\theta}$. 

	Note that the cost function used in eq.~\eqref{eq:OptimCost} results in a maximum likelihood estimate if white Gaussian additive noise satisfying Assumption~\ref{ass:Noise} is present at the output. A weighted version (using either a nonparametric noise model or a monic parametric noise model) of the cost function needs to be used to obtain a (sample) maximum likelihood estimate in the case of additive colored Gaussian noise. The estimated sample variance of the noise can be used as a weighting function in a frequency domain cost function. A maximum likelihood estimator is asymptotically efficient \citep{Cramer1946,Ljung1999,Pintelon2012}. This means that the maximum likelihood estimator achieves the lowest asymptotic mean squared error possible with a consistent estimator, \Maarten[]{under the assumption that the system belongs to the model class, and using the noise framework specified in Assumption \ref{ass:Noise}}.

	Unfortunately, this cost function is, in most of the cases considered in this paper, non-convex with respect to the parameters $\boldsymbol{\theta}$. A Levenberg-Marquardt algorithm \citep{Levenberg1944,Marquardt1963,Fletcher1991,Pintelon2012} is used to minimize the cost function in a numerically stable way. This algorithm converges to a local minimum of the cost function. Hence, good initial values of the parameters are very important to ensure the convergence of the estimates to the global minimum. Such initial estimates can be obtained using the algorithms that are described in the following sections. Many of these initialization algorithms result in a consistent estimate, but they are not efficient.

\Maarten[]{\section{Initialization Methods: an Overview} \label{sec:Methods}
	Besides the linear approximations-based methods that are described in the following sections, many other initialization methods for block-oriented \Koen{nonlinear} systems are available in the literature. Often two or more estimation methods are combined to identify the system under test. The most important approaches are highlighted here, a more detailed overview can be found in \citep{Giri2010}.
    
    \subsection{Overparametrization}
    	The overparametrization approach is one of the most known methods for the identification of the Hammerstein and Hammerstein-Wiener type of structures  \citep{Chang1971,Hsia1976,Bai1998,SchoukensM2012}.
        
        In a first step\KoenSpellCheck{,} the \KoenSpellCheck[numbers]{number} of parameters is artificially increased to obtain a problem \KoenSpellCheck[which]{that} is \KoenSpellCheck[linear-in-the-parameters]{linear in the parameters}. In the next step, the estimated overparametrized parameter set is reduced, e.g. by using \Koen[an]{a} singular value decomposition to retrieve the underlying \M[low rank]{low-rank} structure of the \KoenSpellCheck[overparameterized]{overparametrized} parameter set.
        
        One of the main downsides of the \KoenSpellCheck[overparameterization]{overparametrization} approach is the large number of parameters that needs to be estimated in the first step, resulting in a higher variance on the estimated parameters. Nuclear-norm regularization and kernel-based methods have been proposed to improve the classical approaches \citep{Falck2010,Risuleo2015}, requiring less user interaction and reducing the variance on the estimated parameters.
           
    \subsection{Alternating Least Squares Methods}
    	Alternating least squares (ALS) is a powerful tool to solve problems \KoenSpellCheck[which]{that} are multilinear in the parameters. ALS was one of the first successful identification methods for the identification of Hammerstein systems \citep{Narendra1966}. Indeed, the Hammerstein identification problem can be written as a problem \KoenSpellCheck[which]{that} is \KoenSpellCheck[bilinear-in-the-parameters]{bilinear in the parameters}. The convergence of this approach is analyzed in detail by \citep{Stoica1981,Bai2004,Bai2010}. Similar techniques have been applied to Hammerstein-Wiener structures \citep{Zhu2002,SchoukensM2012}.
    	
        Alternatively, one can also use separable least squares approaches to estimate the parameters of block-oriented \Koen{nonlinear} models. This technique has been applied by \citep{Westwick2001,Bai2002b,Aljamaan2011} for the identification of Hammerstein and Wiener models.
    
    \subsection{Inverse Estimation}
		Some identification problems that seem to be hard to solve at first glance turn out to be much \KoenSpellCheck[more easy]{easier} to solve when (part of) the system is inverted (if the inverse exists). Such an estimate can serve as a good initialization of the actual system identification problem at hand. \M[The problem of]{} Identifying a Wiener, or a Hammerstein-Wiener system is a well-known example of this approach.
        
        The problem of identifying a Wiener system is nonlinear in the parameters in its most common formulation. However, the problem becomes linear in the parameters when the inverse of the static nonlinearity is identified instead of the static nonlinearity itself \citep{Hagenblad1998,Pajunen1992}. A similar reasoning holds for the Hammerstein-Wiener case \citep{Zhu2002,SchoukensM2012}. Note\KoenSpellCheck[,]{} that this approach results in biased estimates when there is noise present at the output of the system. Nevertheless, the estimates obtained by such an approach can still serve as initial values for a nonlinear optimization.
        
        The inverse estimation approach can also be of interest for nonlinear feedback systems, such as the simple feedback and the Wiener-Hammerstein feedback structure. Indeed, by considering the inverse of the system structure the nonlinear feedback structure can be transformed in a parallel branch feedforward structure \Koen{(see for example Section 4.3.4 in \citet{Markusson2001a}).}

    \subsection{Other Approaches}
    	Besides the approaches discussed above many other block-oriented identification and initialization methods are described in the literature. Some of them are discussed shortly below, while others are included in the following sections when each specific block-oriented nonlinear structure is discussed in more detail.
        
        Subspace methods have been successfully applied for the identification \Koen{of} MIMO \M{(multiple input multiple output)} Hammerstein, Wiener, Wiener-Hammerstein and Hammerstein-Wiener structures \citep{Verhaegen1996,Westwick1996,Goethals2005a,Katayama2016}.
    	
        Frequency-\Koen[D]{d}omain approaches can be used to exploit the nonlinear behavior of the block-oriented nonlinear system\KoenSpellCheck{,} which is not visible in the time-domain, e.g. \citep{Giri2014}.
        
        The structured nature of block-oriented nonlinear systems can be exploited by carefully designing the input signal used during the estimation, e.g. step signals, sine wave signals, sinesweeps or phase-coupled multisines \citep{Giri2014,Tiels2015,Rebillat2016,Castro2016}. Most of the methods described in the following sections make use of Gaussian input signals (Definition~\ref{def:Gaussian}) to exploit Bussgang's Theorem (see Section~\ref{sec:Bussgang}).
        
    	The problem of identifying a block-oriented \Koen{nonlinear} system can sometimes be written as a convex optimization problem, or the problem can be relaxed such that it becomes convex. An example of such an approach is described in \citep{Cai2011,Han2012}.
}
\section{Single Branch Models} \label{sec:Single}

The BLA of a single branch block-oriented \Koen{nonlinear} system (Hammerstein, Wiener, Wiener-Hammerstein) provides valuable information about the system. It is shown in the following sections that the BLA of these systems is equal to a scaled version of the linear dynamics that are present in the system. This allows the user to perform an important part of the model selection problem, i.e. the selection of the number of poles and zeros in the system dynamics, in the LTI framework. Although the case of the Hammerstein-Wiener system structure is a bit more involved, the BLA approach \M[proves to provide]{provides} a good initial estimate for Hammerstein-Wiener identification.

\subsection{Wiener, Hammerstein \& Wiener-Hammerstein \Maarten[]{Structure}} \label{sec:Hammerstein}
	
    \subsubsection*{Hammerstein Structure} 
    The Hammerstein structure is, together with the Wiener structure, one of the most simple block-oriented \Koen{nonlinear} model structures. A Hammerstein structure \citep{Hammerstein1930} consists of a static nonlinear block \Maarten[]{$f(u)$} followed by an LTI block \Maarten[]{$S(q)$} (see Figure~\ref{fig:Structures}). The Hammerstein structure is used to model nonlinear systems for which the static nonlinearity is present at the input of the system, such as a nonlinear actuator followed by a linear process, or some chemical processes and physiological systems \citep{Giri2010}. 
		
	Many different Hammerstein identification algorithms exist in the literature and are listed in \citep{Giri2010}. The first Hammerstein identification algorithm was introduced in \citep{Narendra1966}, and further improved in \citep{Bai2004} and \citep{Bai2010}. Several other Hammerstein identification algorithms were developed in the last decades. A non-exhaustive list is given below where the methods are classified depending on their properties: kernel-based  and mixed parametric-nonparametric identification algorithms \citep{Hasiewicz2010,Mzyk2014,Risuleo2015}, parametric approaches \citep{Chang1971,Crama2004,Schoukens2007}, overparametrization approaches \citep{Bai1998,Falck2010,Risuleo2015}, blind identification algorithms \citep{Bai2002,Vanbeylen2008}, a set-membership approach \citep{Sznaier2009}, and (continuous-time) instrumental variables approach \citep{Laurain2012}. The feedback setting is considered in \citep{Giri2015} and the errors-in-variables setting in \citep{Mu2015}.  The MIMO Hammerstein case is considered in \citep{Verhaegen1996,Goethals2005b}. The works of \citep{Giri2011,Wang2012,Yong2015} extend the Hammerstein structure to contain dynamic nonlinearities (e.g. backlash or hysteresis nonlinearities), and \citep{Tehrani2015,Guarin2015} consider a time-varying Hammerstein structure.
    
    \subsubsection*{Wiener Structure} 
    A Wiener structure \citep{Wiener1958} consists of an LTI block \Maarten[]{$G(q)$} followed by a static nonlinear block \Maarten[]{$g(x)$} (see Figure~\ref{fig:Structures}). A Wiener structure is often used to model systems where the nonlinear behavior is present at the output of the system. Some examples of such structures are systems with sensor nonlinearities, overflow valves and some physiological systems \citep{Giri2010}.
	
	\KoenSpellCheck[Several Wiener system identification algorithms are listed in \citep{Giri2010}.]{\citet{Giri2010} list several Wiener system identification algorithms.} A wide variety of Wiener identification algorithms have been developed in the last decades. Some are grouped here based on their properties. A non-exhaustive list of methods \M[contains]{includes}: nonparametric (or semi-parametric) identification algorithms \citep{Hasiewicz2010,Lindsten2013,Mzyk2014}, parametric approaches \citep{Billings1977,Hunter1986,Wigren1993,Crama2001,Westwick2003}, a minimal Lipschitz approach \citep{Pelckmans2011}, (orthogonal) basis function expansion approaches \citep{Lacy2003,Aljamaan2014}, blind identification algorithms \citep{Vanbeylen2009}, a recursive approach \citep{Greblicki2001}, set-membership approaches \citep{Sznaier2009,Cerone2015}, and \M{approaches} for systems that contain dynamic nonlinearities \citep{Giri2014,Yong2015}. \KoenSpellCheck[The MIMO Wiener system case is considered in \citep{Westwick1996,Janczak2007}.]{\citet{Westwick1996,Janczak2007} consider the MIMO Wiener system case.} Most of the approaches consider that the noise source is present at the output of the system only, however some methods allow for process noise to be present in between the linear dynamics and the \M{static} nonlinearity \citep{Hagenblad2008,Lindsten2013,Wahlberg2014,Wahlberg2015}.
    
    \subsubsection*{Wiener-Hammerstein Structure} 
    A Wiener\-/Hammerstein model has a structure that is a bit more involved than the Wiener or the Hammerstein model structure. The static nonlinearity \Maarten[]{$f(x)$} is now sandwiched \Maarten[in between]{between} two LTI blocks \Maarten[]{$G(q)$ and $S(q)$} (see Figure~\ref{fig:Structures}). The presence of the two LTI blocks results in a problem that is harder to identify: the main challenge in identifying a Wiener-Hammerstein system lies in the separation of the dynamics over the front and the back LTI block. 
		
	The approaches described in \citep{Vandersteen1997,Tiels2015} estimate the two LTI blocks in a nonparametric way using carefully designed input signals. \KoenSpellCheck[Volterra (and \M{tensor} decomposition) based approaches are presented in \citep{Weiss1998,Tan2002,Kibangou2006,Ahmed2013},]{\citet{Weiss1998,Tan2002,Kibangou2006,Ahmed2013} present Volterra (and Tensor decomposition) based approaches.} \KoenSpellCheck[an iterative approach is discussed by \citep{Voros2007},]{\citet{Voros2007} discusses an iterative approach.} \KoenSpellCheck[a recursive EIV method is presented in \citep{Mu2014} and]{\citet{Mu2014} present a recursive EIV method, and} \KoenSpellCheck[evolutionary-based approaches are presented in \citep{Dewhirst2010,Naitali2016}.]{\citet{Dewhirst2010,Naitali2016} present evolutionary-based approaches.} \KoenSpellCheck[The MISO (multiple input single output) Wiener\-/Hammerstein case is discussed in \citep{Boutayeb1995}, and the MIMO case in \citep{Katayama2016}.]{\citet{Boutayeb1995} discuss the MISO (multiple input single output) Wiener\-/Hammerstein case, \citep{Katayama2016} the MIMO case.} Many approaches use the BLA, or a similar correlation analysis, as a starting point for the algorithm e.g.  \citep{Billings1978,Billings1980,Korenberg1986,Westwick2003,Crama2005,Lauwers2011,Tan2012,Sjoberg2012,Westwick2012,SchoukensM2014,Vanbeylen2014,Tiels2014a,Tiels2015,Giordano2015,Ase2015,Katayama2016}.
	
	\subsubsection*{Best Linear Approximation}
        The BLA of a Wiener-Hammerstein system is a scaled version of the LTI dynamics that are present in $G(q)S(q)$ when input signals belonging to \Maarten[the Riemann equivalence class of asymptotically normally distributed excitation signals]{$\mathbb{S}_{U}$} are used. 
		
		\begin{mythm} \label{theo:BlaWienerHammerstein}
			The BLA $G_{bla}(e^{j\omega T_s})$ of a Wiener-Hammerstein system excited by an input signal belonging to \Maarten[the Riemann equivalence class of asymptotically normally distributed excitation signals]{$\mathbb{S}_{U}$} (Definition~\ref{def:Gaussian}) is asymptotically given by \Maarten[]{(for the number of data growing to infinity)}:
			\begin{align}
				G_{bla}(e^{j\omega T_s}) = \alpha G(e^{j\omega T_s})S(e^{j\omega T_s}), \quad \alpha \in \mathbb{R}. \label{eq:BlaWienerHammerstein}
			\end{align}	
		\end{mythm}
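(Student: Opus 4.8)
The plan is to track the signal through the two linear filters and the central nonlinearity, reduce the BLA to a single cross-power spectrum, and then collapse everything using Bussgang's theorem. First I would name the internal signals of the Wiener-Hammerstein cascade, working with the zero-mean signals that enter the BLA definition: $x(t) = G(q)\tilde{u}(t)$ at the output of the front filter, $w(t) = f(x(t))$ after the static nonlinearity, and $\tilde{y}_0(t) = S(q)w(t)$ at the noiseless output. By the closed form~\eqref{eq:BlaPower} it suffices to compute $S_{\tilde{Y}\tilde{U}}$, and since the noise $n_y$ is independent of the input (Assumption~\ref{ass:Noise}) it contributes nothing to this cross-spectrum, so $S_{\tilde{Y}\tilde{U}} = S_{\tilde{Y}_0\tilde{U}}$.

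Next I would establish the Gaussian hypothesis that Bussgang's theorem needs. Because $\tilde{u} \in \mathbb{S}_{U}$ is asymptotically normally distributed (Definition~\ref{def:Gaussian}) and $x = G(q)\tilde{u}$ is an LTI filtering of it, the signal $x$ is itself asymptotically zero-mean Gaussian and jointly Gaussian with $\tilde{u}$. This lets me apply Bussgang's theorem to the memoryless map $f$ with input $x$, giving $S_{WX}(e^{j\omega T_s}) = \alpha\, S_{XX}(e^{j\omega T_s})$ with $\alpha = E\{f'(x(t))\} \in \mathbb{R}$. Writing the associated decomposition $w = \alpha x + d$, the distortion term $d$ satisfies $S_{DX} = S_{WX} - \alpha S_{XX} = 0$.

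The core step then propagates this through both filters. On the input side, $x = G(q)\tilde{u}$ yields the standard filtered-signal identities $S_{XU} = G\, S_{\tilde{U}\tilde{U}}$ and $S_{DX} = \bar{G}\, S_{DU}$; the second turns $S_{DX}=0$ into $S_{DU}=0$ (away from the isolated frequencies where $G$ vanishes). Hence $S_{WU} = \alpha S_{XU} + S_{DU} = \alpha G\, S_{\tilde{U}\tilde{U}}$. On the output side, $\tilde{y}_0 = S(q)w$ gives $S_{\tilde{Y}_0\tilde{U}} = S\, S_{WU} = \alpha\, G S\, S_{\tilde{U}\tilde{U}}$. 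Dividing by $S_{\tilde{U}\tilde{U}}$ as in~\eqref{eq:BlaPower} produces $G_{bla}(e^{j\omega T_s}) = \alpha\, G(e^{j\omega T_s}) S(e^{j\omega T_s})$ with $\alpha$ real, which is exactly~\eqref{eq:BlaWienerHammerstein}.

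I expect the main obstacle to be the justification of Bussgang's theorem for the class $\mathbb{S}_{U}$ rather than for an exactly Gaussian process: this is precisely where the word ``asymptotically'' in the statement is earned, and it rests on the Riemann equivalence of Definition~\ref{def:Gaussian} guaranteeing asymptotic normality of $\tilde{u}$, and hence of $x$, as the data length grows, so that the Gaussian cross-spectral identities hold up to an $O(N^{-1})$ remainder. A secondary technical point is the passage from $S_{DX}=0$ to $S_{DU}=0$ at the zeros of $G$, which I would dispatch by continuity, since both sides of the target identity vanish there in any case.
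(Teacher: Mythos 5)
Your proof is correct and follows essentially the approach the paper intends: the paper itself only cites \citep{Enqvist2010,SchoukensM2015c} for this result, but the argument in those references—and the machinery the paper sets up in Section~\ref{sec:Bussgang} precisely for this purpose—is the same Bussgang-based cross-spectral propagation you carry out (Gaussianity of the filtered input, $S_{WX}=\alpha S_{XX}$, then pushing the proportionality through $G$ and $S$ to get $G_{bla}=\alpha GS$). Your detour via the distortion term $d$ and the step $S_{DX}=0\Rightarrow S_{DU}=0$, with the continuity fix at the zeros of $G$, is a slightly more roundabout but valid substitute for invoking the jointly-Gaussian form of Bussgang's theorem ($R_{uw}(\tau)=\alpha R_{ux}(\tau)$) directly.
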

		\begin{proof}
			See \citep{Enqvist2010,SchoukensM2015c}.
		\end{proof}
        
\Koen{        For the Hammerstein and the Wiener case, the BLA simplifies to:
        \begin{align}
        	G_{bla,H}(e^{j\omega T_s}) &= \alpha S(e^{j\omega T_s}), \quad \alpha \in \mathbb{R}, \label{eq:BlaHamm} \\
        	G_{bla,W}(e^{j\omega T_s}) &= \alpha G(e^{j\omega T_s}), \quad \alpha \in \mathbb{R}. \label{eq:BlaWiener}
        \end{align}
		}
        
		Note that the gain $\alpha$ \M{not only depends} on the system characteristics, but also on the \Koen{variance}, the coloring, and the offset (DC value) of the input signal. In the case of the Hammerstein structure, the BLA does not depend on the coloring of the input signal since the input signal is applied directly to the nonlinearity of the system without prior filtering.
	
	\subsubsection*{Hammerstein identification using the BLA}
		Many different approaches use the BLA, or similar correlation methods, to identify a Hammerstein system \citep{Billings1978,Hunter1986,Westwick2003,Crama2004}. The identification of a Hammerstein system using the BLA is quite straightforward. The BLA approach decouples the identification of the LTI block and the static nonlinear block. 
		
		Firstly, the LTI dynamics are estimated using the BLA, this results in a consistent estimate of the LTI block $\hat{S}(q)$, up to an unknown gain exchange $\alpha$ (see eq.~\eqref{eq:BlaHamm}). Secondly, the static nonlinearity is estimated using a parametrization \KoenSpellCheck[which]{that} is linear in the parameters. This results in a problem that is linear in the parameters:
		\begin{align}
			y(t) &= S(q)\left[f(u(t))\right] \\
				 &= \sum_{i=1}^{n_f} \gamma_i S(q)\left[f_i(u(t))\right], \label{eq:HammOutLin}
		\end{align}
		where $n_f$ denotes the number of basis functions $f_i(.)$ that is used to describe the static nonlinearity. Note that eq.~\eqref{eq:HammOutLin} is linear in the parameters $\gamma_i$ when $S(q)$ is replaced by its estimate $\hat{S}(q)$. Hence, the parameters $\gamma_i$ can easily be obtained by a linear least squares estimation. This results in a consistent, but not efficient, estimator of the model parameters \citep{SchoukensM2015c}. An efficient estimate can be obtained with the nonlinear optimization described in Section~\ref{sec:ML}.
        
        \subsubsection*{Wiener identification using the BLA}
		The identification of Wiener systems using the BLA is \M[again]{}very similar to the Hammerstein case. \KoenSpellCheck[BLA or correlation based identification methods for Wiener systems are presented in \citep{Billings1978,Hunter1986,Crama2001,Westwick2003}.]{\citet{Billings1978,Hunter1986,Crama2001,Westwick2003} present BLA- or correlation-based identification methods for Wiener systems.}
		
		A two-step identification approach results in a consistent estimate of the model parameters. First, the LTI block is estimated using the BLA. Second, the static nonlinear block is identified starting from the estimated intermediate signal $\hat{x}(t) = \hat{G}(q)\left[u(t)\right]$ and the system output. One can use his/her favorite model structure (polynomial, neural network, ...) to describe the static nonlinearity. This results in a consistent, but not efficient, estimator of the model parameters \citep{SchoukensM2015c}. An efficient estimate can be obtained with the nonlinear optimization described in Section~\ref{sec:ML}.
        
		\subsubsection*{Wiener-Hammerstein identification using the BLA}	
		There are many BLA-based Wiener-Hammerstein identification methods. In this section\KoenSpellCheck{,} we focus on the method that is presented in \citep{Sjoberg2012}. The challenge of separating the front and the back dynamics is reformulated as a pole and zero allocation problem: which poles and zeros belong to the front dynamic block and which poles and zeros belong to the back dynamic block?
		
		The BLA is estimated and parametrized in a first step, and the poles and zeros of the BLA are computed. These poles and zeros are, due to Theorem~\ref{theo:BlaWienerHammerstein} and under the assumption that the product $G(q)S(q)$ does not contain any pole-zero cancellations, the poles and zeros of the LTI blocks of the Wiener-Hammerstein system:
		\begin{align}
				G_{bla} &= \alpha G(q)S(q) = \alpha \frac{\prod_{i=1}^{n_b+n_d}(z_i-q^{-1})}{\prod_{i=1}^{n_a+n_c}(p_i-q^{-1})},
			\end{align}
            \Maarten[]{where $n_a$ and $n_b$ are the number of poles and zeros in $G(q)$ respectively, while $n_c$ and $n_d$ are the number of poles and zeros in $S(q)$.} The poles $p_i$ and the zeros $z_i$ of the parametrized BLA are the combined poles and zeros of the LTI blocks $G(q)$ and $S(q)$ of the Wiener-Hammerstein system.
	
			These poles and zeros of the BLA need to be assigned to either $G(q)$ or $S(q)$: 
			\begin{align}
				G_{bla} &= \alpha \hat{G}(q)\hat{S}(q),	\\	
				\hat{G}(q) &= \frac{\prod_{i=1}^{n_b+n_d}(z_i-q^{-1})^{\beta_{z_i}}}{\prod_{i=1}^{n_a+n_c}(p_i-q^{-1})^{\beta_{p_i}}}, \\
				\hat{S}(q) &= \frac{\prod_{i=1}^{n_b+n_d}(z_i-q^{-1})^{1-\beta_{z_i}}}{\prod_{i=1}^{n_a+n_c}(p_i-q^{-1})^{1-\beta_{p_i}}},
			\end{align}
			 $\beta_{p_i}$ and $\beta_{z_i}$ are binary parameters that assign the pole or zero to either $\hat{G}(q)$ ($\beta_{p_i} = 1$), or to $\hat{S}(q)$ ($\beta_{p_i} = 0$). The parameters $\beta_{p_i}$ and $\beta_{z_i}$ are grouped in the binary parameter vector $\boldsymbol{\beta} \in \{0,1\}^{(n_a + n_b + n_c + n_d) \times 1}$.
			 
		There is a finite number of different realizations of the parameter vectors $\boldsymbol{\beta}$. All possible realizations of the parameter vector are evaluated. To evaluate a realization of $\boldsymbol{\beta}$, a static nonlinearity is estimated for every realization $k$. This problem is linear in the parameters if the static nonlinearity $\hat{f}_k(x(t))$ is described by a linear combination of basis functions. It is solved using linear least squares regression, resulting in the estimate $\hat{f}_k$. Finally, all the estimated models are ranked based on their mean squares error $mse_k$:
		\begin{align}
		 mse_k &= \frac{1}{N} \sum_{t=1}^{N} \left(y(t) - \hat{y}_k(t)\right)^2, \\
		 \hat{y}_k(t) &= \hat{S}_k(q)\left[ \hat{f}_k\left( \hat{G}_k(q)\left[ u(t)\right] \right) \right],
		\end{align}
		where $N$ is the number of data points in the measured input-output record. The model with the lowest error is selected. 
	
		It is important to note that each complex conjugate pole or zero pair is assigned either to $S(q)$ or $G(q)$\M[in the combinations]. The single elements of the pair are never assigned separately. This is required as the LTI blocks consist of transfer functions with real coefficients only. This also introduces a possible disadvantage of the approach: it can happen that two real poles (or zeros) are combined into a complex conjugate pole (or zero) pair during the parametrization of the BLA since the solution handles real poles and zeros as single elements. This pair cannot be correctly assigned if these poles (or zeros) originate from the two different LTI subsystems.
		
		The total number of realizations of $\boldsymbol{\beta}$ depends on the number of poles and the number of zeros, $n_p$ and $n_z$ respectively. The total number is at least $2^{\frac{n_{p}+n_{z}}{2}}$ and at most $2^{n_{p}+n_{z}}$. The minimum number of realizations is obtained when all poles and zeros are part of a complex conjugate pair, the maximum number of realizations is obtained when all poles and zeros are real. This number increases very rapidly with the model order, which makes the brute-force scan approach in \citep{Sjoberg2012} computationally expensive, since it equals the total number of least squares regressions that need to be performed to scan all possible pole-zero allocations.
		
		A discrete optimization over the binary parameter vector $\boldsymbol{\beta}$ can be used to speed up the algorithm \citep{SchoukensM2014b}.  The evaluation of all possible pole and zero allocations can be avoided by using a fractional approach \citep{Vanbeylen2014,Giordano2015}, or by using a \M{higher-order} correlation, e.g. the so-called QBLA, to separate the dynamics over the front and the back LTI block \citep{Billings1978,Schoukens2008,Westwick2012,SchoukensM2014}. An equivalent subspace version of the method presented in \citep{Sjoberg2012} is developed in \citep{Ase2015}, and extended to the MIMO case in \citep{Katayama2016}.
		
		Although \KoenSpellCheck[it is shown in \citep{Sjoberg2012}]{\citet{Sjoberg2012} show} that this approach results in a consistent estimate of the system parameters, a nonlinear optimization of all the parameters simultaneously as described in Section~\ref{sec:ML}, following the linear initialization step, can be used to reduce the variance of the estimates.

\subsection{Hammerstein-Wiener \Maarten[]{Structure}} \label{sec:HammersteinWiener}
	\M[Hammerstein and Wiener models were discussed in the previous sections. A more general]{Another single branch model} structure is the Hammerstein\-/Wiener model (static nonlinear block \Maarten[]{$f(u)$} - LTI block \Maarten[]{$G(q)$} - static nonlinear block \Maarten[]{$g(r)$}, see Figure~\ref{fig:Structures}). These models are used in a wide range of applications such as chemical processes \citep{Giri2010}, ionospheric dynamics \citep{Palanthandalam2005}, submarine detection \citep{Abrahamson2007}, and RF (radio frequency) power amplifier modeling \citep{Taringou2010}. 
	
	Different identification approaches are proposed to identify a Hammerstein\-/Wiener model or models that are very similar to Hammerstein\-/Wiener block structures. Amongst those are: iterative approaches \citep{Zhu2002, Voros2004}, overparametrization methods \citep{Bai1998,SchoukensM2012,SchoukensM2015c}, frequency domain methods \citep{Crama2004b,Brouri2014a}, subspace methods \citep{Goethals2005a}, stochastic algorithms \citep{Wang2008}, blind approaches \citep{Bai2002}, continuous-time instrumental variable approaches \citep{Ni2013}, and maximum likelihood\KoenSpellCheck{-}based methods \citep{Wills2013}. The identification of Hammerstein-Wiener systems with a backlash input nonlinearity is discussed in \citep{Brouri2014b}.
	
	\subsubsection*{Best Linear Approximation}
		Contrary to Hammerstein and Wiener systems, the BLA of a Hammerstein-Wiener system cannot be simplified to a simple function of the system dynamics, even if input signals belonging to \Maarten[the Riemann equivalence class of asymptotically normally distributed excitation signals]{$\mathbb{S}_{U}$} are used. The intermediate signal $r(t)$ (see Figure~\ref{fig:Structures}) is not Gaussian due to the presence of the first static nonlinearity. Hence, Bussgang's theorem cannot be applied. However, \citep{Wong2012} shows that the BLA will not differ too much from a scaled version of the LTI block $G(q)$ in many practical cases (i.e. if the memory length of $G(q)$ is large w.r.t. the correlation length of the intermediate signal $x(t)$).
		
		The theoretical framework of the $\epsilon$-approximation \M[(a small signal analysis of the system, see \citep{Schoukens2015})]{}can also be used to replace the BLA framework. The $\epsilon$-approximation of a Hammerstein-Wiener system is a scaled version of the LTI dynamics $G(q)$ present in the system \citep{Schoukens2015}.
	
	\subsubsection*{Identification using the BLA}
		 The BLA (or $\epsilon$-approximation) is used as a starting point for the identification of a Hammerstein-Wiener system in \citep{Crama2004b}. First, the BLA is estimated resulting in an estimate $\hat{G}(q)$ of the LTI block. Next, the front static nonlinearity and the inverse of the back static nonlinearity are identified by minimizing:
		\begin{align}
			V_N = \sum_{t=1}^{N} \left( \sum_{i=1}^{n_f} \gamma_i \hat{G}(q)\left[ f_i(u(t)) \right]  - \sum_{j=1}^{n_g} \delta_j g_j(y(t)) \right)^2
		\end{align}
		where it is assumed that the static nonlinearity $f(u)$ and the inverse of the static nonlinearity $g(r)$ can be represented by a linear combination of $n_f$ and $n_g$ nonlinear basis functions respectively. This problem is linear in the parameters, it can be solved using a total least squares approach \citep{VanHuffel2002}. In a next step the estimate of the inverse of the back static nonlinear block can be replaced by an estimate of the forward static nonlinearity $g(r)$. \Maarten[]{Note that the estimates obtained using the total least squares approach are not consistent due to the noise framework specified in Assumption~\ref{ass:Noise}}. These estimates serve as an initialization for the nonlinear optimization described in Section~\ref{sec:ML}\Maarten[]{, resulting in a consistent estimate of the model parameters (if all assumptions are satisfied)}.
        
\section{Parallel Branch Model Structures} \label{sec:Parallel}
\Maarten[]{Parallel block-oriented structures (parallel Hammerstein, parallel Wiener and parallel Wiener-Hammerstein, see Figure~\ref{fig:Structures}) are a logical extension of the single branch block-oriented structures. The use of a parallel branch structure can be motivated by the presence of multiple signal paths inside the system under test. Another possible reason for the use of parallel branch models is their increased modeling capacity. It is well-known that the parallel Wiener and the parallel Wiener-Hammerstein structure are universal approximators of fading memory systems (see Section~\ref{sec:Overview} for more information)}. 

In the case of parallel block-oriented structures the BLA will \M{not only} provide \M[not only]{}information about the LTI blocks that are present in the system, but also about the number of parallel branches in the system. By combining the BLAs obtained at different setpoints of the systems, the selection of the number of parallel branches and the selection of the number of poles and zeros that are present in the system can be performed \citep{SchoukensM2011,SchoukensM2012b,SchoukensM2015a}. This allows the user to perform the selection of the number of poles and zeros in the system dynamics in the LTI framework, and to select the number of parallel branches in the system at an early stage of the identification procedure.

The BLA can also be used to construct a set of orthonormal basis functions to represent the system dynamics \citep{Tiels2014b}. These basis functions are based on a number of user-selected pole locations. \M[The poles of the BLA are a natural choice as a starting point to construct the orthonormal basis functions.]{A natural choice is to use the poles of the BLA as a starting point to construct the orthonormal basis functions.}

\subsection{Parallel Hammerstein and Parallel Wiener \Maarten[]{Structures}} \label{sec:ParallelHammerstein}
	Parallel Hammerstein and parallel Wiener models are generalized forms of the Hammerstein and Wiener structure. They consist of a parallel connection of more than one Hammerstein (or Wiener) system (see Figure~\ref{fig:Structures}). Parallel Hammerstein models are sometimes called Uryson models, memory polynomial, or generalized Hammerstein models in the literature, see for example \citep{Gallman1975,Billings1980,Doyle2002,Ghannouchi2009}. 

	\textit{Parallel Hammerstein:} A popular approach that is used to identify a parallel Hammerstein system is to feed the input signal to a parallel connection of some nonlinear basis functions that are each followed by a finite impulse response filter (FIR) to realize a branch. The corresponding LTI multiple-input-single-output identification problem is then solved using one's favorite FIR identification method in \citep{Gadringer2007}. In earlier work \citep{Gallman1975}, Hermite polynomials were used as orthogonal basis functions to model the static nonlinearity for Gaussian input signals. These approaches result in a linear least squares identification problem, which is easy to solve. However, they suffer from two drawbacks: i) the user gets no physical insight into the number of parallel paths in the device under test (DUT), as the number of parallel branches in the model is set by the degree of the nonlinearity, ii) for systems with long memory effects, a large number of parameters is needed due to the FIR-nature of the dynamic model. The model is therefore not parsimonious. \Maarten{\KoenSpellCheck[A fully nonparametric approach, requiring almost no user interaction, is presented in \citep{Rebillat2016}, based on an exponential sinesweep excitation.]{\citet{Rebillat2016} present a fully nonparametric approach, which requires almost no user interaction, based on an exponential sinesweep excitation.}}
	
	\textit{Parallel Wiener:} Previously proposed parallel Wiener estimation methods suffer from some disadvantages. Some methods rely on an estimate of the Volterra kernel of the system under test \citep{Kibangou2009}. This requires a very large amount of data for the identification. Other methods are limited to the use of finite impulse response models for the linear subsystems \citep{Lyzell2012c,Westwick1997,Korenberg1991}. These approaches typically result in an unwanted high number of parallel branches.
	
	\subsubsection*{Best Linear Approximation}
		Similar to the single branch case, the BLA of a parallel Hammerstein and parallel Wiener system is a simple function of the LTI dynamics that are present in the system.
		
		\begin{mythm} \label{theo:BlaParallelHammerstein}
			The BLA $G_{bla}(e^{j\omega T_s})$ of a parallel Hammerstein system excited by an input signal belonging to \Maarten[the Riemann equivalence class of asymptotically normally distributed excitation signals]{$\mathbb{S}_{U}$} (Definition~\ref{def:Gaussian}) is asymptotically given by \Maarten[]{(for the number of data growing to infinity)}:
			\begin{align}
				G_{bla}(e^{j\omega T_s}) = \sum_{k=1}^{n_{br}} \alpha_k S^{[k]}(e^{j\omega T_s}), \quad \alpha_k \in \mathbb{R}, \label{eq:BlaParallelHammerstein}
			\end{align}	
            \Maarten[]{where $n_{br}$ is the number of parallel branches in the system.}
		\end{mythm}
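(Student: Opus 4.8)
The plan is to exploit the fact that the best linear approximation, as characterized by eq.~\eqref{eq:BlaPower}, depends on the output only through the cross-power spectrum $S_{\tilde{Y}\tilde{U}}$, which is additive over a parallel (summed) output. First I would write the noiseless output of the parallel Hammerstein system as the superposition of its branch contributions,
\begin{align}
	y_0(t) = \sum_{k=1}^{n_{br}} S^{[k]}(q)\left[ f^{[k]}(u(t)) \right], \nonumber
\end{align}
where $f^{[k]}$ and $S^{[k]}$ denote the static nonlinearity and the LTI block of the $k$-th branch. Denoting by $\tilde{y}^{[k]}(t)$ the zero-mean version of the $k$-th branch output, linearity of the cross-correlation (and hence of the cross-power spectrum) in its first argument gives $S_{\tilde{Y}\tilde{U}} = \sum_{k=1}^{n_{br}} S_{\tilde{Y}^{[k]}\tilde{U}}$. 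Because the common denominator $S_{\tilde{U}\tilde{U}}$ factors out of eq.~\eqref{eq:BlaPower}, the BLA itself superposes: $G_{bla} = \sum_{k=1}^{n_{br}} G_{bla}^{[k]}$, where $G_{bla}^{[k]} = S_{\tilde{Y}^{[k]}\tilde{U}}/S_{\tilde{U}\tilde{U}}$ is exactly the BLA of the $k$-th single-branch Hammerstein system driven by the same input.

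It then remains to evaluate each branch BLA. For the $k$-th branch I would set $v^{[k]}(t) = f^{[k]}(u(t))$, so that $\tilde{y}^{[k]}(t) = S^{[k]}(q)\tilde{v}^{[k]}(t)$ and $S_{\tilde{Y}^{[k]}\tilde{U}}(e^{j\omega T_s}) = S^{[k]}(e^{j\omega T_s})\,S_{\tilde{V}^{[k]}\tilde{U}}(e^{j\omega T_s})$, the LTI filter passing through the cross-spectrum as a multiplicative factor. Since the input belongs to $\mathbb{S}_{U}$ it is (asymptotically, for the number of data growing to infinity) zero-mean Gaussian, so Bussgang's theorem (Section~\ref{sec:Bussgang}) applies to the memoryless map $f^{[k]}$ and yields $S_{\tilde{V}^{[k]}\tilde{U}}(e^{j\omega T_s}) = \alpha_k\,S_{\tilde{U}\tilde{U}}(e^{j\omega T_s})$ with the real scalar $\alpha_k = E\{(f^{[k]})'(u(t))\}$. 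Substituting back gives $G_{bla}^{[k]} = \alpha_k S^{[k]}$, which is precisely eq.~\eqref{eq:BlaHamm} applied per branch, and summing over $k$ delivers eq.~\eqref{eq:BlaParallelHammerstein}.

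I would expect the only delicate point to be the asymptotic invocation of Bussgang's theorem: for a Gaussian input the proportionality $S_{\tilde{V}^{[k]}\tilde{U}} = \alpha_k S_{\tilde{U}\tilde{U}}$ holds exactly, but for a random phase multisine in $\mathbb{S}_{U}$ it holds only up to the $O(N^{-1})$ Riemann-equivalence error of Definition~\ref{def:Gaussian}, which is why the statement is asymptotic in the data length. I would also need to check that $S_{\tilde{U}\tilde{U}}(e^{j\omega T_s})$ is bounded away from zero on the band of interest so that the division in eq.~\eqref{eq:BlaPower} is well posed, and, if a DC offset is present, note that each $\alpha_k$ is simply the expected slope evaluated at the shifted operating point. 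Crucially, no interaction (cross) terms between branches appear, because the BLA is a purely second-order object that is linear in the output; this is exactly what makes the parallel case a direct superposition of the single-branch Hammerstein result rather than an independent derivation.
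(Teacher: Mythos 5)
Your proof is correct and follows essentially the same route as the paper's: decompose the output into the sum of the branch outputs, apply the single-branch Hammerstein result (eq.~\eqref{eq:BlaHamm}, i.e.\ Bussgang's theorem per branch), and use the additivity of the BLA over summed outputs. The paper states this tersely, while you additionally spell out the justification for additivity (linearity of the cross-power spectrum in the output with the common denominator $S_{\tilde{U}\tilde{U}}$ factoring out of eq.~\eqref{eq:BlaPower}), which is a useful elaboration but not a different argument.
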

		\begin{proof}
			The BLA of a Hammerstein system is given by eq.~(\ref{eq:BlaHamm}). This holds for each separate branch in the parallel Hammerstein system. The output $y$ is given by a summation of the outputs $y_k$ of each parallel Hammerstein branch. Hence, the BLA of a parallel Hammerstein system is given by the sum of the BLAs of each parallel branch.
		\end{proof}
		
		\begin{mythm} \label{theo:BlaParallelWiener}
			\Maarten[]{The BLA $G_{bla}(e^{j\omega T_s})$ of a parallel Wiener system excited by an input signal belonging to \Maarten[the Riemann equivalence class of asymptotically normally distributed excitation signals]{$\mathbb{S}_{U}$} is asymptotically given by (for the number of data growing to infinity):
			\begin{align}
				G_{bla}(e^{j\omega T_s}) = \sum_{k=1}^{n_{br}} \alpha_k G^{[k]}(e^{j\omega T_s}), \quad \alpha_k \in \mathbb{R}, \label{eq:BlaParallelWiener}   
			\end{align}	
            under the assumption that the MISO static nonlinearity $g(.)$ of the parallel Wiener system can be approximated arbitrarily well in least squares sense by a MISO polynomial.}
		\end{mythm}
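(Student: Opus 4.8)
The plan is to generalize the single-branch argument behind Theorem~\ref{theo:BlaWienerHammerstein} to a multiple-input static nonlinearity, replacing the scalar Bussgang identity by its multivariate counterpart. First I would pass to the zero-mean signals of eqs.~\eqref{eq:uTilde}--\eqref{eq:yTilde} and introduce the intermediate signals $x^{[k]}(t) = G^{[k]}(q)\tilde{u}(t)$ at the output of each linear branch, so that the measured output reads $\tilde{y}(t) = g\left(x^{[1]}(t),\dots,x^{[n_{br}]}(t)\right) - E\{g\}$. Because $\tilde{u}(t)$ is asymptotically Gaussian (Definition~\ref{def:Gaussian}) and each $G^{[k]}(q)$ is LTI, the vector $\left(x^{[1]}(t),\dots,x^{[n_{br}]}(t),\tilde{u}(t-\tau)\right)$ is jointly Gaussian for every lag $\tau$, which is precisely the setting in which Bussgang's theorem can be applied.

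The core step is to evaluate the cross-power spectrum $S_{\tilde{Y}\tilde{U}}$ entering the BLA expression~\eqref{eq:BlaPower}. I would first compute the cross-correlation $R_{\tilde{y}\tilde{u}}(\tau) = E\left\{ g\left(x^{[1]}(t),\dots,x^{[n_{br}]}(t)\right)\tilde{u}(t-\tau)\right\}$ and apply the multivariate extension of Bussgang's theorem (Stein's lemma for jointly Gaussian vectors), yielding
\begin{align}
R_{\tilde{y}\tilde{u}}(\tau) = \sum_{k=1}^{n_{br}} \alpha_k\, R_{x^{[k]}\tilde{u}}(\tau), \quad \alpha_k = E\left\{ \frac{\partial g}{\partial x^{[k]}} \right\}. \nonumber
\end{align}
The assumption that $g$ is (arbitrarily well approximated by) a MISO polynomial is exactly what makes this step rigorous: each $\alpha_k$ is then a finite real constant, independent of $t$ by stationarity, since all Gaussian moments of the partial derivatives exist; in particular $\alpha_k\in\mathbb{R}$ because the $x^{[k]}(t)$ and the derivatives of $g$ are real-valued. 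Transforming to the frequency domain and using $S_{X^{[k]}\tilde{U}}(e^{j\omega T_s}) = G^{[k]}(e^{j\omega T_s})\,S_{\tilde{U}\tilde{U}}(e^{j\omega T_s})$ for each LTI branch gives $S_{\tilde{Y}\tilde{U}} = \left(\sum_{k} \alpha_k G^{[k]}\right) S_{\tilde{U}\tilde{U}}$; dividing by $S_{\tilde{U}\tilde{U}}$ as in~\eqref{eq:BlaPower} produces exactly~\eqref{eq:BlaParallelWiener}.

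I expect the main obstacle to be the justification of the multivariate Bussgang step rather than the bookkeeping surrounding it. One clean route is to expand the polynomial $g$ into monomials in the $x^{[k]}$ and apply Isserlis' (Wick's) theorem to each Gaussian moment $E\left\{\prod_{k=1}^{n_{br}}\left(x^{[k]}(t)\right)^{m_k}\,\tilde{u}(t-\tau)\right\}$: pairing the single factor $\tilde{u}(t-\tau)$ with each $x^{[k]}(t)$ in turn reproduces the derivative-expectation structure above, while every pairing that leaves $\tilde{u}(t-\tau)$ unmatched contributes zero to this first-order cross-correlation. A subtlety worth checking is that the least-squares polynomial approximation of $g$ may be exchanged with the expectation defining the BLA, so that passing to the polynomial limit does not alter $G_{bla}$; this is where the \emph{arbitrarily well in least squares sense} qualifier in the statement does its work.
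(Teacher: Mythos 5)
Your proof is correct, but note that the paper itself offers no argument to compare against: its proof of Theorem~\ref{theo:BlaParallelWiener} is simply a pointer to \citep{SchoukensM2015c}. Your derivation is therefore a genuine reconstruction, and it follows the natural route: the per-branch superposition argument that the paper uses for the parallel Hammerstein case (Theorem~\ref{theo:BlaParallelHammerstein}) is not available here, because the MISO nonlinearity $g(\cdot)$ couples the branches, and your multivariate Bussgang/Stein step is exactly the ingredient needed to replace it. The chain $R_{\tilde{y}\tilde{u}}(\tau)=\sum_{k} \alpha_k R_{x^{[k]}\tilde{u}}(\tau)$ with $\alpha_k = E\left\{\partial g/\partial x^{[k]}\right\}$ (constant by stationarity), followed by $S_{X^{[k]}\tilde{U}} = G^{[k]} S_{\tilde{U}\tilde{U}}$ and division as in \eqref{eq:BlaPower}, is sound, and the polynomial-approximation hypothesis is indeed what licenses both the Wick computation and the interchange of the approximation limit with the expectation. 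Two small blemishes, neither fatal: (i) you silently drop the measurement noise $n_y(t)$ from $\tilde{y}(t)$; under Assumption~\ref{ass:Noise} it is zero-mean and independent of $u(t)$, so it contributes nothing to $S_{\tilde{Y}\tilde{U}}$, but this should be stated rather than omitted. (ii) In the Isserlis step there are no pairings that ``leave $\tilde{u}(t-\tau)$ unmatched'' --- every complete pairing matches it with some $x^{[k]}(t)$, and odd-order moments vanish outright; the correct bookkeeping is to group the complete pairings by the partner of $\tilde{u}(t-\tau)$, which then directly yields the derivative-expectation form you want.
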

		\begin{proof}
			See \citep{SchoukensM2015c}.
		\end{proof}
		
		An important observation with respect to eq.~(\ref{eq:BlaParallelHammerstein}) and eq.~(\ref{eq:BlaParallelWiener}) is that the \M{input-dependent} gain $\alpha_k$ only appears in the numerator. For example, the BLA in eq.~\eqref{eq:BlaParallelHammerstein} is given by:
		\begin{align}
			G_{bla}(q) = \sum_{k=1}^{n_{br}} \alpha_k S^{[k]}(q) = \sum_{k=1}^{n_{br}} \alpha_k \frac{B_{s}^{[k]}(q)}{A_{s}^{[k]}(q)}.
		\end{align}
		Introducing a common denominator results in the following BLA expression:
		\begin{align}
			G_{bla}(q) &= \frac{\sum_{k=1}^{n_{br}}\alpha_k B_{s}^{[k]}(q) \prod_{j=1,j\neq k}^{n_{br}}A_{s}^{[j]}(q)} {\prod_{k=1}^{n_{br}}A_{s}^{[k]}(q)}.  \label{eq:BlaCommonDen}
		\end{align}
		A similar result can be obtained for the parallel Wiener case.
		
		This means that the poles of the identified BLA are also the poles of the LTI blocks that are present in the system. The zeros of the BLA of a parallel Hammerstein (or parallel Wiener) system may change when the variance\Maarten[, power spectrum,]{} or the offset (DC value) of the input signal changes. \Maarten[]{The gains $\alpha_k$ are also dependent on the coloring of the input signal in the case of a parallel Wiener system.} Indeed, the denominator stays the same when the gains $\alpha_k$ change, but the numerator coefficients depend on the gains $\alpha_k$. \KoenSpellCheck[This is discussed in further detail in \citep{Schoukens2015,SchoukensM2015c,SchoukensM2015a}.]{\citet{Schoukens2015,SchoukensM2015c,SchoukensM2015a} discuss this in further detail.}

	\subsubsection*{Parallel Wiener and parallel Hammerstein identification using the BLA - SVD Approach}
		The main difficulty in identifying a parallel Wiener or parallel Hammerstein system is separating the dynamics over the different parallel branches. This problem can be made easier by allowing more parallel branches to be present in the estimated model (see e.g. \citep{Gallman1975,Billings1980,Doyle2002,Lyzell2012c,Westwick1997,Korenberg1991}). A singular value decomposition (SVD) based approach to identify a parallel Wiener or parallel Hammerstein system with a low number of parallel branches is presented in \citep{SchoukensM2011,SchoukensM2013a,SchoukensM2015a}. This section explains the driving idea of this approach.
				
		The approach proposed in \citep{SchoukensM2013a,SchoukensM2015a} starts with an estimation of the BLA of the considered system for different operating conditions. The different operating conditions have been obtained using input signals with different power spectra. This includes the use of different magnitudes, different offsets, or different coloring of the power spectra.
	
		Next, the measured BLAs are parametrized using a different LTI model for each operating condition. A common denominator model is used for all operating conditions simultaneously:
		\begin{align}
				\hat{G}^{[i_r]}_{bla}\left(q,\hat{\boldsymbol{\theta}}_{bla}\right) &= \frac{\hat{d}_{0}^{[i_r]} + \hat{d}_{1}^{[i_r]}q^{-1} + \ldots + \hat{d}_{n_{d}}^{[i_r]}q^{-n_{d}}}{\hat{c}_{0} + \hat{c}_{1}q^{-1} + \ldots + \hat{c}_{n_{c}}q^{-n_{c}}}, \label{eq:ParBlaCommon}
		\end{align}
		where superscript $[i_r]$ denotes the setpoint of the system. This is indeed possible, as eq. \eqref{eq:BlaCommonDen} assures that the poles of the different measured BLAs are the same. A consistent estimate of the overall dynamics that are present in the nonlinear parallel Hammerstein or parallel Wiener system results. A matrix $\hat{\boldsymbol{D}}$ is constructed \M[containing the stacked]{by stacking the} estimated numerator coefficients of the BLAs at the \Maarten[]{$R$} different operating conditions:
        \begin{align}
        	\hat{\boldsymbol{D}} &= \left[ \begin{array}{cccc}
        		\hat{d}_{0}^{[1]} & \hat{d}_{1}^{[1]} & \ldots & \hat{d}_{n_{d}}^{[1]} \\
        		\hat{d}_{0}^{[2]} & \hat{d}_{1}^{[2]} & \ldots & \hat{d}_{n_{d}}^{[2]} \\
        		\vdots						& \vdots						& \ddots & \vdots								 \\
        		\hat{d}_{0}^{[R]} & \hat{d}_{1}^{[R]} & \ldots & \hat{d}_{n_{d}}^{[R]} \\
        		\end{array} \right]. \label{eq:D}
        \end{align}	
		Starting from the parametrized BLAs, a decomposition of the overall dynamics at the different operating conditions is calculated. It uses the singular value decomposition (SVD) of the $\hat{\boldsymbol{D}}$ matrix: 
		\begin{align}
				\hat{\boldsymbol{D}} &= \boldsymbol{U}_{bla} \boldsymbol{\Sigma}_{bla} \boldsymbol{V}_{bla}^T,
		\end{align}
		where the columns of $\boldsymbol{V}_{bla}$ are the right singular vectors which act as an orthonormal basis for the row space of the $\hat{\boldsymbol{D}}$-matrix, $\boldsymbol{\Sigma}_{bla}$ is a diagonal matrix containing the singular values, and the columns of $\boldsymbol{U}_{bla}$ constitute an orthonormal basis for the column space.
			
		The column vectors in $\boldsymbol{V}_{bla}$ provide an estimate of the numerator coefficients for each branch $k$:
		\begin{align}
			\hat{S}^{[k]}(q) &= \frac{\hat{\delta}_{0}^{[k]} + \hat{\delta}_{1}^{[k]}q^{-1} + \ldots + \hat{\delta}_{n_{d}}^{[k]}q^{-n_{d}}}{\hat{c}_{0} + \hat{c}_{1}q^{-1} + \ldots + \hat{c}_{n_{c}}q^{-n_{c}}},
		\end{align}
		where $\hat{\delta}_{j}^{[k]}$ is the element of the $j$-th row and $k$-th column of the matrix $\boldsymbol{V}_{bla}$. 
		
		The number of parallel branches in the system is obtained based on the estimated rank of the decomposed matrix (based on the singular values obtained in $\boldsymbol{\Sigma}_{bla}$). \Maarten[]{Note that the matrix $\hat{\boldsymbol{D}}$ needs to be whitened before the rank estimation can be performed (see \citep{Rolain1997,SchoukensM2015a} for more information).}
		
		Once an estimate of the LTI blocks of the parallel Hammerstein or parallel Wiener system have been obtained, the estimation of the static nonlinearities in the system boils down to a simple linear least squares problem as in the single branch Hammerstein or Wiener case.
		
		Although \KoenSpellCheck[it is shown in \citep{SchoukensM2015a,SchoukensM2015c}]{\citet{SchoukensM2015a,SchoukensM2015c} show} that this approach results in a consistent estimate of the system parameters, a nonlinear optimization of all the parameters simultaneously as described in Section~\ref{sec:ML} can be used to reduce the variance of the estimates.
		
\subsection{Wiener-Schetzen \Maarten[]{Structure}} \label{sec:WienerSchetzen}
	
	A Wiener-Schetzen model (see Figure~\ref{fig:Wiener-Schetzen}) is a subset of parallel Wiener models where the dynamics are represented by orthonormal basis functions (OBFs) \citep{Heuberger2005}, and the nonlinearity is represented by a multivariate polynomial with the outputs of the OBFs as inputs. The parameters of this model are the polynomial coefficients. Nevertheless, the OBFs should be properly selected.

	Wiener-Schetzen models appear in the literature under various names.	\KoenSpellCheck[In the original ideas of Wiener \citep{Wiener1958}, (continuous-time) Laguerre OBFs are used]{In his original ideas, Wiener uses (continuous-time) Laguerre OBFs \citep{Wiener1958}} and the model is called a Laguerre system (see also \citet{Boyd1985}). \citet{Kibangou2005b} \KoenSpellCheck[uses]{use} discrete-time Laguerre OBFs and \KoenSpellCheck[uses]{use} the name Laguerre-Volterra filters. When using generalized orthonormal basis functions (GOBFs), the names GOB-Volterra filters \citep{Kibangou2005a} and Wiener/Volterra models \citep{daRosa2007} are used, amongst others.

	\begin{figure}
		\centering
		\includegraphics[width=0.45\textwidth]{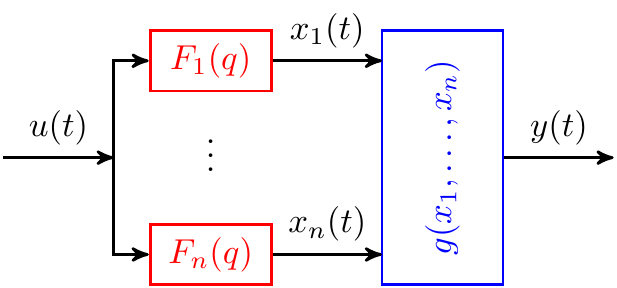}
		\caption{A Wiener-Schetzen model ($F_1(q)$, \ldots, $F_n(q)$ are OBFs, $g(x_1, \ldots, x_n)$ is a multivariate polynomial). \label{fig:Wiener-Schetzen}}
	\end{figure}

	Several methods have been proposed to identify Wiener-Schetzen models. \citet{Gomez2004} \Maarten[proposes]{propose} a non-iterative method, where it is assumed that the nonlinearity is invertible and that this inverse can be described in terms of known basis functions. \KoenSpellCheck[An optimal selection of one of the parameters in a truncated two-parameter Kautz OBF expansion is proposed in \citet{daRosa2007}.]{\citet{daRosa2007} propose an optimal selection of one of the parameters in a truncated two-parameter Kautz OBF expansion.} This optimal selection is done by minimizing an upper bound of the error made by approximating the system's Volterra kernels by the truncated OBF expansion. The Volterra kernels of the system are assumed to be known in this approach. A model structure very similar to a Wiener-Schetzen model is identified in \citet{Totterman2009} and \citet{Gomez2012}, where the nonlinearity is expanded in terms of support vector machines (SVMs).

	\subsubsection*{Best Linear Approximation}
		Since a Wiener-Schetzen model is a parallel Wiener model with a MISO polynomial nonlinearity, its BLA is given by Theorem~\ref{theo:BlaParallelWiener}. In particular, the BLA of a Wiener-Schetzen model is a linear combination of the dynamics in the different branches. The poles of the BLA are thus also the poles of the LTI dynamics that are present in the system.

	\subsubsection*{Identification using the BLA}
	
	OBFs are stable, discrete-time, proper, rational transfer functions $F_i(e^{j \omega})$ that are orthogonal with respect to the inner product
	\begin{equation}
		\langle F_1 , F_2 \rangle = \frac{1}{2 \pi} \int_{- \pi}^{\pi} F_1(e^{j \omega}) {F_2}^*(e^{j \omega}) \mathrm{d}\omega.
	\end{equation}
	The outputs $x_1(t)$ and $x_2(t)$ of two OBFs $F_1$ and $F_2$ that share the same input are orthogonal if the input is white. This follows from the generalized expected value \mbox{$\bar{E}\{x_1(t) x_2(t)\}$}, which can be written as
	\begin{equation}
		\bar{E}\{x_1(t) x_2(t)\} = \frac{1}{2 \pi} \int_{- \pi}^{\pi} F_1(e^{j \omega}) {F_2}^*(e^{j \omega}) \lvert U(e^{j \omega}) \rvert^2 \mathrm{d}\omega
	\end{equation}
	using Parseval's theorem \citep{Heuberger2005}, and which reduces to the inner product of two orthonormal transfer functions if the input is white ($\lvert U(e^{j \omega}) \rvert^2$ is constant). Although in general no claims can be made about the orthogonality for non-white inputs, the considered OBFs are quite robust with respect to the coloring of the OBFs \citep[Chapter~6]{Heuberger2005}.

	Popular choices for the OBFs are Laguerre, Kautz, and Takenaka-Malmquist basis functions \citep{Heuberger2005}. The considered OBFs can be constructed from a set of poles $\{\xi_i\}$, and in its most general form \M{they} are the Takenaka-Malmquist basis functions:
	\begin{equation}
		F_i(q) = \frac{\sqrt{1 - \lvert \xi_i \rvert^2}}{q - \xi_i} \prod_{k = 1}^{i - 1} \frac{1 - {\xi_k}^* q}{q - \xi_k},
	\end{equation}
	for $i = 1, 2, \ldots$; $\xi_i \in \mathbb{C}$; and $\lvert \xi_i \rvert < 1$.
	\M[Note that choosing]{Choosing} all poles $\xi_i$ in the origin results in an FIR model. Choosing all poles identical and real results in a Laguerre basis. A two-parameter Kautz basis is obtained when choosing a pole structure \mbox{$\{ \xi_1, \xi_2, \xi_1, \xi_2, \ldots \}$}. Note that this pole structure can include multiple repetitions of $\xi_1,\xi_2$, where $\xi_1$ and $\xi_2$ are either real or complex conjugated. \M{GOBFs} are obtained when a finite set of $n_{\xi}$ poles is periodically repeated, i.e.
	\begin{equation}
		\xi_{i + (k-1) n_{\xi}} = \xi_i
		, \quad
		i = 1, 2, \ldots, n_\xi
		, \quad
		k = 1, 2, \ldots
	\end{equation}
	
	A proper \Maarten[]{user} choice of the pole locations is important in order to accurately represent the linear dynamics with a limited number of OBFs. Choosing the poles equal to the true poles of the underlying linear dynamics of the (parallel) Wiener system is optimal in this sense. The poles of the BLA are thus excellent candidates to be used in constructing the OBFs. Possible mismatches between the true poles and the estimates from the BLA can be  \Maarten[compensated for]{compensated} by periodically repeating the set of pole estimates. Fast convergence rates \Koen{(convergence in probability of the modeled output to the system output)} can be obtained \citep{Tiels2014b}.
	
	Once the OBFs are constructed, the intermediate signals $x_1(t), \ldots, x_n(t)$ can be computed. Retrieving the polynomial coefficients then boils down to a linear regression \M{problem}. \M[Note that by]{By} choosing OBFs and Hermite polynomials \citep{Schetzen2006}, which are orthogonal for Gaussian inputs, this regression problem is optimally conditioned for white Gaussian inputs.
	Note that this is only true for infinitely long data records; for a finite data record, approximate orthogonality is obtained.
	
	Compared to the BLA/SVD approach in the previous section, only one experiment suffices in this case. The model is also linear in the parameters, but this comes at the cost of a large number of parameters, since the number of polynomial coefficients increases combinatorially with the number of OBFs and the degree of \M{the} nonlinearity. Dimension reduction techniques, such as presented in \citep{SchoukensM2013a,Lyzell2012a,Lyzell2012b,Lyzell2012c} can be used to reduce the number of basis functions, and hence, the number of parameters.
	
\subsection{Parallel Wiener-Hammerstein \Maarten[]{Structure}} \label{sec:ParallelWienerHammerstein}
	
	Parallel Hammerstein and parallel Wiener model structures are restricted to systems with dominant output dynamics (Hammerstein) or input dynamics (Wiener). A further increase in flexibility of the model structure is obtained here by considering the parallel Wiener\-/Hammerstein structure (see Figure~\ref{fig:Structures}). Previously published methods \citep{Baumgartner1975,Wysocki1976,Billings1979,Billings1982} studied a subclass of the parallel Wiener\-/Hammerstein structure. Identification methods based on repeated sine measurements \citep{Baumgartner1975,Wysocki1976}, or white Gaussian inputs \citep{Billings1979} are available for this model structure. \KoenSpellCheck[It is shown in \citep{Palm1978,Palm1979}]{\citet{Palm1978,Palm1979} shows} that a wide class of Volterra systems can be approximated \M{arbitrarily} well using a parallel Wiener\-/Hammerstein model structure. \Maarten[]{\KoenSpellCheck[An identification method based on the decomposition of Volterra kernels is presented in \citep{Dreesen2017}.]{\citet{Dreesen2017} present an identification method based on the decomposition of Volterra kernels.}}
	
	The parallel Wiener\-/Hammerstein system class that is used here is a more general system class than the $S_M$ system class that is used in \citep{Baumgartner1975,Wysocki1976,Billings1979} in the following sense. The $S_M$ model has $M$ parallel branches, and the $m$-th branch contains a monomial nonlinearity that is fixed and equal to $(.)^m$. This restricts the model to have a polynomial nonlinearity only, and to contain only one branch for each degree of this polynomial nonlinearity. Thus a parallel Wiener\-/Hammerstein system containing two parallel branches, each with different LTI subsystems, and with different polynomial nonlinearities can, in general, not be modeled by a $S_M$ model. The methods that \KoenSpellCheck[are presented in \citep{SchoukensM2015a,SchoukensM2015c}]{\citet{SchoukensM2015a,SchoukensM2015c} present} also make some extra assumptions on the parallel Wiener\-/Hammerstein system. However, even when these assumptions are met, the considered system class still allows for a much more complex system behavior than the $S_M$ class does.
	
	\subsubsection*{Best Linear Approximation}
		Similar to the parallel Hammerstein and parallel Wiener case, the BLA of a parallel Wiener-Hammerstein system is a simple function of the LTI dynamics that are present in the parallel branches of the system.		
		\begin{mythm} \label{theo:BlaParallelWienerHammerstein}
			The BLA $G_{bla}(e^{j\omega T_s})$ of a parallel Wiener-Hammerstein system excited by an input signal belonging to \Maarten[the Riemann equivalence class of asymptotically normally distributed excitation signals]{$\mathbb{S}_{U}$} (Definition~\ref{def:Gaussian}) is asymptotically given by \Maarten[]{(for the number of data growing to infinity)}:
			\begin{align}
				G_{bla}(e^{j\omega T_s}) = \sum_{k=1}^{n_{br}} \alpha_k G^{[k]}(e^{j\omega T_s})S^{[k]}(e^{j\omega T_s}), \quad \alpha_k \in \mathbb{R}. \label{eq:BlaParallelWienerHammerstein}
			\end{align}	
		\end{mythm}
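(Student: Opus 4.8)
The plan is to follow exactly the same route as the proof of Theorem~\ref{theo:BlaParallelHammerstein} for the parallel Hammerstein case: combine the \emph{additivity} of the BLA over parallel branches with the single-branch result of Theorem~\ref{theo:BlaWienerHammerstein}. I would write the parallel Wiener-Hammerstein output as the sum of the contributions of the $n_{br}$ branches, $\tilde{y}(t) = \sum_{k=1}^{n_{br}} \tilde{y}_k(t)$, where $\tilde{y}_k(t)$ is the zero-mean output of the $k$-th Wiener-Hammerstein branch (input block $G^{[k]}(q)$, static nonlinearity $f^{[k]}(\cdot)$, output block $S^{[k]}(q)$), and where all branches are driven by the common input $\tilde{u}(t) \in \mathbb{S}_{U}$.

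First I would establish that the BLA is additive across the branches. Starting from the defining expression eq.~\eqref{eq:BlaPower},
\begin{align}
    G_{bla}(e^{j\omega T_s}) = \frac{S_{\tilde{Y}\tilde{U}}(e^{j\omega T_s})}{S_{\tilde{U}\tilde{U}}(e^{j\omega T_s})},
\end{align}
I would use the fact that the cross-power spectrum is linear in its first argument. Since $\tilde{y}=\sum_k \tilde{y}_k$, the cross-correlation satisfies $R_{\tilde{y}\tilde{u}}(\tau)=\sum_k R_{\tilde{y}_k\tilde{u}}(\tau)$, so $S_{\tilde{Y}\tilde{U}}=\sum_k S_{\tilde{Y}_{k}\tilde{U}}$. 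Dividing by the common denominator $S_{\tilde{U}\tilde{U}}$ gives
\begin{align}
    G_{bla}(e^{j\omega T_s}) = \sum_{k=1}^{n_{br}} \frac{S_{\tilde{Y}_{k}\tilde{U}}(e^{j\omega T_s})}{S_{\tilde{U}\tilde{U}}(e^{j\omega T_s})} = \sum_{k=1}^{n_{br}} G_{bla}^{[k]}(e^{j\omega T_s}).
\end{align}
The crucial point here is that $S_{\tilde{Y}_{k}\tilde{U}}$ involves only the common input and the output of branch $k$; it is therefore identical to the cross-power spectrum that branch $k$ would produce in isolation, so $G_{bla}^{[k]}$ is exactly the standalone BLA of the $k$-th branch.

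Then I would apply Theorem~\ref{theo:BlaWienerHammerstein} to each branch. Because $\tilde{u}\in\mathbb{S}_{U}$ and $G^{[k]}(q)$ is LTI, the (asymptotically) Gaussian character of the signal feeding each $f^{[k]}$ is preserved, so Bussgang's theorem and Theorem~\ref{theo:BlaWienerHammerstein} apply branchwise and give $G_{bla}^{[k]}(e^{j\omega T_s}) = \alpha_k G^{[k]}(e^{j\omega T_s}) S^{[k]}(e^{j\omega T_s})$ with $\alpha_k \in \mathbb{R}$. Substituting into the additivity expression yields eq.~\eqref{eq:BlaParallelWienerHammerstein}. The main obstacle is not the algebra but the careful justification of the branchwise application of Theorem~\ref{theo:BlaWienerHammerstein}: one must verify that the presence of the other branches does not alter $G_{bla}^{[k]}$ (handled above by the linearity of the cross-correlation in the output) and that the asymptotic $N\to\infty$ limit may be taken inside the finite sum, which is immediate since each branch BLA converges and there are only $n_{br}$ terms. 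Unlike the parallel Wiener case of Theorem~\ref{theo:BlaParallelWiener}, no polynomial-approximation assumption on a shared MISO nonlinearity is needed here, because each branch carries its own SISO nonlinearity and is a genuine Wiener-Hammerstein system to which Theorem~\ref{theo:BlaWienerHammerstein} applies directly.
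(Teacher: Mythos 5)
Your proof is correct and follows essentially the same route as the paper's: the paper likewise writes the output as the sum of the branch outputs, applies Theorem~\ref{theo:BlaWienerHammerstein} to each branch, and concludes that the BLA of the parallel system is the sum of the branch BLAs. Your explicit justification of the additivity step via the linearity of the cross-power spectrum in eq.~\eqref{eq:BlaPower} simply fills in a detail the paper leaves implicit.
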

		\begin{proof}
			The BLA of a Wiener-Hammerstein system is given by eq.~(\ref{eq:BlaWienerHammerstein}). This holds for each separate branch in the parallel Wiener-Hammerstein system. The output $y$ is given by a summation of the outputs $y_k$ of each parallel Wiener-Hammerstein branch. Hence, the BLA of a parallel Wiener-Hammerstein system is given by the sum of the BLA of each parallel branch.
		\end{proof}
	
	\subsubsection*{Identification using the BLA - SVD Approach}
		The identification algorithm for the parallel Wiener-Hammerstein case is a combination of the identification algorithms presented in Sections~\ref{sec:Hammerstein} and~\ref{sec:ParallelHammerstein}. \KoenSpellCheck[It is described in detail in \citep{SchoukensM2015a,SchoukensM2015c}.]{\citet{SchoukensM2015a,SchoukensM2015c} describe this in detail.}
		
		Firstly, the BLA decomposition approach is applied as described in Section~\ref{sec:ParallelHammerstein} for the parallel Hammerstein and parallel Wiener case to split the BLA dynamics over the different parallel branches.
		
		Next, the poles and zeros of each branch are allocated to the front or the back simultaneously using the brute-force pole-zero allocation scan described in Section~\ref{sec:Hammerstein}. The static nonlinearity of the model is estimated during this step.
		
		Although \KoenSpellCheck[it is shown in \citep{SchoukensM2015a,SchoukensM2015c}]{\citet{SchoukensM2015a,SchoukensM2015c} show} that this approach results in a consistent estimate of the system parameters, a nonlinear optimization of all the parameters simultaneously as described in Section~\ref{sec:ML} can be used to reduce the variance of the estimates.
		
		Two disadvantages of this approach are \M{i)} the high computational load due to the brute-force pole-zero allocation scan, and \M{ii)} the introduction of a MIMO static nonlinearity in the final model although the original system is \M{intrinsically} decoupled.\M[ of nature.]{} The appearance of the MIMO nonlinearity is due to equivalence transforms and the SVD approach to split the dynamics \citep{SchoukensM2015a,SchoukensM2015c}. The brute-force pole-zero allocation scan can be made faster by allocating the poles and zeros of each branch separately instead of doing the allocation for all branches simultaneously \citep{SchoukensM2015c}. The coupled, MIMO nonlinearity can be decoupled into a number of SISO (single input single output) static nonlinearities in a second step using tensor decomposition methods as \KoenSpellCheck[is shown in \citep{SchoukensM2012a,Tiels2013,Dreesen2015}]{\citet{SchoukensM2012a,Tiels2013,Dreesen2015} show}.
		
\section{Feedback Model Structures} \label{sec:Feedback}
	\Koen{The use of a feedback structure can be motivated by the presence of nonlinear feedback in the system under test, e.g. many mechanical setups show resonant behavior in which the resonance frequency depends on the variance of the excitation signal. As discussed later in this section, this is a typical nonlinear feedback phenomenon. Feedback mechanisms are also often encountered in biological systems. A nonlinear feedback can also be included in the model structure to increase its modeling capacity.}

	The BLA or $\epsilon$-approximation can be used in many cases for the identification of feedback block-oriented \Koen{nonlinear} model structures. Both frameworks allow the user to perform the selection of the number of poles and zeros in the system dynamics in the LTI framework, in the three cases considered in this section. Note that all feedback systems considered in this section are \Koen[considered to be]{} BIBO stable for the selected input class due to the \Koen[considered]{assumed} class of systems in Section~\ref{sec:BLA}.
    \M[No special attention is given here during identification to obtain stable models.]{During the identification procedure, no special effort is made to obtain stable models.} \Koen{Multiple shooting techniques can be used to pass through unstable regions in the parameter space \citep{Mulders2010}. Since all the considered feedback structures can be restructured so that they include the simple nonlinear feedback structure (see Figures~\ref{fig:Structures} and \ref{fig:IndistFB}), stability of the final model can be checked with e.g. the Popov criterion \citep{Haddad1993,Haddad1994,Khalil1996}. This criterion provides a sufficient (not necessary) condition for the overall model to be absolutely stable. A typical assumption is that the linear subblock is observable and controllable, and that the nonlinearity fulfills a sector condition.}
	
	It is commonly assumed for nonlinear feedback systems to have at least one sample delay in either the forward or the backwards path of the feedback loop to avoid the presence of nonlinear algebraic loops during simulation. It is shown in \citep{Relan2016} under band-limited assumptions that the approximation error made by introducing such a delay in the model can be kept arbitrary low by selecting a sufficiently high sampling frequency.
	
\subsection{Simple Feedback Structure} \label{sec:SimpleFB}
	The simple feedback structure consists of the feedback connection of an LTI block and a static nonlinear block (see Figure~\ref{fig:Structures}). One can think of two possible simple feedback structures: LTI block in the feedforward and the nonlinearity in feedback or vice-versa. It can be shown that both combinations are equivalent from an input-output point of view (see Section~\ref{sec:Structures} and \citep{Schoukens2008}). \KoenSpellCheck[A BLA-based identification algorithm for the simple feedback structure is proposed in \citep{Paduart2004,Paduart2008}.]{\citet{Paduart2004,Paduart2008} propose a BLA-based identification algorithm for the simple feedback structure.} \KoenSpellCheck[A slightly more advanced nonlinear {\Koen[feedack]{feedback}} structure is considered in \citep{Pelt2001}.]{\citet{Pelt2001} consider a slightly more advanced nonlinear feedback structure.}
	
	\subsubsection*{Best Linear Approximation}
		Contrary to most of the previous cases, the BLA of a simple feedback system cannot be simplified to an easy function of the system dynamics, even if input signals belonging to \Maarten[the Riemann equivalence class of asymptotically normally distributed excitation signals]{$\mathbb{S}_{U}$} are used. The input signal of the nonlinearity $y(t)$ is not Gaussian due to the presence of the nonlinear feedback loop. Hence, Bussgang's theorem cannot be applied. However, in many practical cases it is fair to assume that the nonlinearity can be linearized to a gain $\alpha$ and thus that the BLA will be approximately given by:
		\begin{align}
			G_{bla}(e^{j\omega T_s}) \approx \frac{G(e^{j\omega T_s})}{1+\alpha G(e^{j\omega T_s})}, \quad \alpha \in \mathbb{R}. \label{eq:BlaSimpleFeedback}
		\end{align}
		\Koen{The quality of this approximation depends on how non-Gaussian the input to the nonlinearity is. The approximation is relevant if the input to the nonlinearity is e.g. the output of a linear block with a sufficiently large memory length. It then follows from the central limit theorem that the input to the nonlinearity resembles a Gaussian \M{signal}. The approximation in \eqref{eq:BlaSimpleFeedback} is relevant if the memory length of $G(q)$ is sufficiently large.}
		
		The $\epsilon$-approximation, contrary to the BLA, results in a simple and exact analytical expression of the linear approximation as a function of the system dynamics \citep{Schoukens2015}. \Koen{The expression for $G_{\epsilon}(e^{j\omega T_s})$ is as in \eqref{eq:BlaSimpleFeedback} with $\alpha$ replaced by $\gamma$.}
	
		The gains $\alpha$ and $\gamma$ \M[depend not only]{do not only depend} on the system characteristics, but they can also depend on the variance, power spectrum, and the offset (DC value) of the input signal.
	
		Note that in the feedback case (also in the Wiener-Hammerstein feedback case in Section~\ref{sec:WienerHammersteinFB}), only the poles of the $\epsilon$-approximation shift with a changing gain $\gamma$:
		\begin{align} \label{eq:BlaCommonNum}
			G_{\epsilon}(q) = \frac{B(q)}{A(q)+\gamma B(q)}, 
		\end{align}
		where
		\begin{align}
			G(q) &= \frac{B(q)}{A(q)}.
		\end{align}
		\KoenSpellCheck[This is discussed in further detail in \citep{Schoukens2015}.]{\citet{Schoukens2015} discuss this in further detail.}	
		
	\subsubsection*{Identification using the BLA}
		The identification algorithm that is presented here follows the lines of the algorithm presented in \citep{Paduart2004,Paduart2008}.
		
		Firstly, it is important to note that an arbitrary gain $\beta$ can be shifted from the static nonlinear block to the linear block in the model. Indeed we have that:
		\begin{align}
			y(t) &= G(q)[u(t)-f(y(t))] \nonumber \\
					 &= G(q)[u(t)-f(y(t))+\beta y(t) - \beta y(t)] \nonumber \\
			(1-\beta G(q))[y(t)] &= G(q)[u(t)-\tilde{f}(y(t))] \\
			y(t) &= \tilde{G}(q)[u(t)-\tilde{f}(y(t))] \nonumber 
		\end{align}
		where $\tilde{f}(y(t)) = f(y(t))-\beta y(t)$ and \Koen[$\tilde{G}(q) = \frac{G(q)}{1-\beta G(q)}$]{$\tilde{G}(q) = \frac{G(q)}{1+\beta G(q)}$}.
		
		This indicates that the BLA results in a good initial estimate of the linear subsystem up to the equivalence transforms present in the model structure:
		\begin{align}
			\hat{G}(q) = G_{bla}(q) \approx \frac{G(q)}{1+\alpha G(q)}.
		\end{align} 
		
		\M{Next, }the static nonlinearity is estimated. The static nonlinearity is assumed to be represented as a linear combination of nonlinear basis functions. It is estimated by cutting the feedback loop:
		\begin{equation}\label{eq:NlIdentFB}\begin{aligned}
			\hat{G}(q)[u(t)]-y(t) &= \hat{G}(q)[\sum_{i=1}^{n_f} \gamma_i f_i(y(t))],  \\
													  &= \sum_{i=1}^{n_f} \gamma_i \hat{G}(q)[f_i(y(t))], 
		\end{aligned}\end{equation}
		where $n_f$ denotes the number of basis functions that is used to describe the static nonlinearity. \M[Note that this]{This} equation is linear in the parameters $\gamma_i$. Hence, the parameters $\gamma_i$ can easily be obtained by a linear least squares estimation. 
		
		\M[Note that the]{The} noisy output is used as a regressor during the initial identification of the static nonlinearity in eq.~\eqref{eq:NlIdentFB}. This results in a very simple initialization scheme which works well for high signal to noise ratios.
		
		Finally, a nonlinear optimization (see Section~\ref{sec:ML}) of all the parameters of the simple feedback structure together is applied such that a consistent estimate is obtained. \Koen{\M[Note that the]{The} noise is assumed \M{to be present} at the output of the system (Assumption~\ref{ass:Noise}), i.e. outside the feedback loop. The feedback mechanism acts on the noise-free signal.} The construction of the Jacobian of the simple feedback structure is more involved and time consuming (compared to the single and parallel branch case) due to its recursive dependency on the output at previous time instances. This holds as well for the Wiener-Hammerstein feedback case and the LFR case in Sections~\ref{sec:WienerHammersteinFB} and~\ref{sec:LFR}.

\subsection{Wiener-Hammerstein Feedback \Maarten[]{Structure}} \label{sec:WienerHammersteinFB}
	The Wiener-Hammerstein feedback structure is a generalized version of the simple feedback structure. It consists of a Wiener-Hammerstein system in the feedforward path combined with an LTI system in the feedback path (see Figure~\ref{fig:Structures}). Both BLA-based approaches \citep{Schoukens2008} and a convex relaxation\KoenSpellCheck{-}based approach \citep{Sou2008} have been developed for this system structure. It can be shown that this structure is equivalent to an LTI system in the forward path and a Wiener-Hammerstein system in the feedback path \citep{Schoukens2008}. 
	
	\subsubsection*{Best Linear Approximation}
		As for the simple feedback structure case, the BLA of a Wiener-Hammerstein feedback structure cannot be simplified to an easy function of the system dynamics if input signals belonging to \Maarten[the Riemann equivalence class of asymptotically normally distributed excitation signals]{$\mathbb{S}_{U}$} are used. However, in many practical cases it is fair to assume that the BLA will be approximately of the form:
		\begin{align}
			G_{bla}(e^{j\omega T_s}) &\approx \frac{\alpha G^{[1]}(e^{j\omega T_s})G^{[2]}(e^{j\omega T_s})}{1+\alpha G^{[1]}(e^{j\omega T_s})G^{[2]}(e^{j\omega T_s})G^{[3]}(e^{j\omega T_s})} \nonumber \\
															 &\alpha \in \mathbb{R}. \label{eq:BlaWienerHammersteinFB}
		\end{align}
        \Koen{The approximation in \eqref{eq:BlaWienerHammersteinFB} is relevant if the memory length of $G^{[2]}(q)G^{[3]}(q)G^{[1]}(q)$ is sufficiently large.}
		
		The $\epsilon$-approximation, contrary to the BLA, results in a simple and exact analytical expression of the linear approximation in function of the system dynamics \citep{Schoukens2015}\Koen[:]{. The expression for $G_{\epsilon}(e^{j\omega T_s})$ is as in \eqref{eq:BlaWienerHammersteinFB} with $\alpha$ replaced by $\gamma$.}
        \Koen[vergelijking hieronder in commentaar]{}
	
		The gains $\alpha$ and $\gamma$ depend not only on the system characteristics, but can also depend on the variance, power spectrum, and the offset (DC value) of the input signal.
		
	\subsubsection*{Identification using the BLA}
		\KoenSpellCheck[The BLA-based identification algorithm is outlined in \citep{Schoukens2008}.]{\citet{Schoukens2008} outline the BLA-based identification algorithm.} The \M{key observation is that} the inverse of the BLA is approximately given by:
		\begin{align}
			\frac{1}{G_{bla}(q)} \approx G^{[3]}(q) + \frac{\beta}{ G^{[1]}(q)G^{[2]}(q)}, \label{eq:BlaInvWienerHammersteinFB}
		\end{align}
		where $\beta = \frac{1}{\alpha}$.
		
		In a first step, the BLA of the system is identified at \Koen[two or more]{$R \ge 2$} different setpoints of the system:
		\begin{align}
			\hat{G}_{bla}^{[j]}(q) &\approx \frac{\alpha^{[j]} G^{[1]}(q)G^{[2]}(q)}{1+\alpha^{[j]} G^{[1]}(q)G^{[2]}(q)G^{[3]}(q)}.
		\end{align}
        \Koen{for $j = 1, \ldots, R$.}
		

		A nonparametric initial estimate $\hat{G}^{[3]}(e^{j\omega T_s})$ of the linear feedback system is obtained as the mean of the inverses of the BLA:
		\begin{align}
			\hat{G}^{[3]}(e^{j\omega_k T_s}) &= \frac{1}{R} \sum_{j=1}^R \frac{1}{\hat{G}_{bla}^{[j]}(e^{j\omega_k T_s})} \\
											 								 &\approx G^{[3]}(e^{j\omega_k T_s}) + \frac{\gamma}{G^{[1]}(e^{j\omega_k T_s})G^{[2]}(e^{j\omega_k T_s})}, \nonumber
		\end{align}
		where $\gamma = \frac{1}{R} \sum_{j=1}^R \beta^{[j]}$. 
		
		\Koen[A nonparametric initial estimate of $\hat{S}(e^{j\omega T_s}) \approx G^{[1]}(e^{j\omega T_s})G^{[2]}(e^{j\omega T_s})$ is obtained as follows:]{A nonparametric estimate of a scaled version of the inverse of $S(e^{j\omega T_s}) = G^{[1]}(e^{j\omega T_s})G^{[2]}(e^{j\omega T_s})$ is obtained at each setpoint:}
		\begin{align}
			E^{[j]}(e^{j\omega_k T_s}) = \frac{1}{\hat{G}_{bla}^{[j]}(e^{j\omega_k T_s})} - \hat{G}^{[3]}(e^{j\omega_k T_s}).
		\end{align}
    	\Koen{These estimates are collected for all frequencies and setpoints in a matrix as follows:}
		\begin{align}
			\boldsymbol{E} = \left[ \begin{array}{c c c}
													E^{[1]}(e^{j\omega_1 T_s}) & \ldots & E^{[R]}(e^{j\omega_1 T_s}) \\
													\vdots										 & \ddots & \vdots										 \\
													E^{[1]}(e^{j\omega_F T_s}) & \ldots & E^{[R]}(e^{j\omega_F T_s}) \\
											 \end{array} \right].
		\end{align}
		The SVD of this matrix results in:
		\begin{align}
			\boldsymbol{E} = \boldsymbol{U} \boldsymbol{\Sigma} \boldsymbol{V}^T.
		\end{align}
		The estimate $\hat{S}^{-1}(e^{j\omega_k T_s})$ is \Koen[a scaled version]{}equal to the first left singular vector present in the matrix $\boldsymbol{U}$. \Koen[This scaling factor can be ignored, it will be accounted for in the estimate of the static nonlinear block.]{Taking its inverse and ignoring the scaling factor (it will be accounted for in the estimate of the static nonlinear block), an estimate of the product $G^{[1]}(e^{j\omega T_s})G^{[2]}(e^{j\omega T_s})$ is obtained.}
		
		Splitting the dynamics $\hat{S}(e^{j\omega_k T_s})$ to the front and the back linear block of the Wiener-Hammerstein system can be done using a combination of the methods presented in Sections~\ref{sec:Hammerstein} and~\ref{sec:SimpleFB}. This results in the estimates $\hat{G}^{[1]}(e^{j\omega_k T_s})$, $\hat{G}^{[2]}(e^{j\omega_k T_s})$ and $\hat{f}(.)$.
		
		Finally, a nonlinear optimization (see Section~\ref{sec:ML}) of all the parameters of the closed loop Wiener-Hammerstein feedback structure together is applied such that a consistent estimate is obtained. \Koen{Note from Assumption~\ref{ass:Noise} that the feedback mechanism acts on the noise-free signal.}
		
\subsection{Linear Fractional Representation} \label{sec:LFR}
	\KoenSpellCheck[The linear fractional representation (LFR) model structure is studied in \citep{Vandersteen1999,Hsu2008,Novara2011,Mulders2013,Vanbeylen2013}.]{\citet{Vandersteen1999,Hsu2008,Novara2011,Mulders2013,Vanbeylen2013} study the linear fractional representation (LFR) model structure.} It is a combination of a Wiener-Hammerstein feedback system with a linear feedforward branch in parallel (see Figure~\ref{fig:Structures}). It is the most general block-oriented \Koen{nonlinear} system that contains only one static nonlinear block. All other block-oriented \Koen{nonlinear} systems with only one static nonlinearity can be reformulated \M[to]{as} an LFR structured system.
	
	The approach that is presented in \citep{Vandersteen1999} relies on a large set of two-tone measurements and Volterra theory to estimate the LFR model. The method presented by \citep{Mulders2013} starts from a polynomial nonlinear state space model and transforms it into a LFR model. The block-oriented, BLA-based method that is presented in \citep{Vanbeylen2013} is studied in some more detail below.
	
	\KoenSpellCheck[An extension to LFR models with a MIMO static nonlinear block is presented in \citep{Hsu2008,Novara2011}.]{\citet{Hsu2008,Novara2011} present an extension to an LFR model with a MIMO static nonlinear block.} This LFR structure with MIMO blocks includes both filtered output and process noise. While the methods presented in \citep{Vandersteen1999,Mulders2013,Vanbeylen2013} start from input-output data only, the methods presented in \citep{Hsu2008,Novara2011} assume that the inputs to the nonlinear block are measured (i.e. they can be obtained from the measured inputs and outputs and the known linear block). Furthermore, the static nonlinear block is identified under the assumptions that the MIMO LTI block is known and that the nonlinear block has a known block-diagonal structure with MISO blocks. 
	
	\subsubsection*{Best Linear Approximation}
	The BLA of a general LFR system cannot be simplified to an easy function of the system dynamics, even if input signals belonging to \Maarten[the Riemann equivalence class of asymptotically normally distributed excitation signals]{$\mathbb{S}_{U}$} are used. The input signal of the nonlinearity $y(t)$ is not Gaussian due to the presence of the nonlinear feedback loop. Hence, Bussgang's theorem cannot be applied. However, in many practical cases it is fair to assume that the nonlinearity can be linearized to a gain $\alpha$ and thus that the BLA will be approximately of the form:
		\begin{align}
			G_{bla}(e^{j\omega T_s}) &\approx  G^{[1]}(e^{j\omega T_s}) + \frac{\alpha G^{[2]}(e^{j\omega T_s})G^{[3]}(e^{j\omega T_s})}{1+\alpha G^{[4]}(e^{j\omega T_s})}, \nonumber \\
			& \quad \alpha \in \mathbb{R}. \label{eq:BlaLFR}
		\end{align}
        \Koen{The approximation in \eqref{eq:BlaLFR} is relevant if the memory length of $G^{[4]}(q)$ is sufficiently large.}
		
		The $\epsilon$-approximation, contrary to the BLA, results in a simple and exact analytical expression of the linear approximation in function of the system dynamics \citep{Schoukens2015}\Koen[:]{.}
        \Koen[vergelijking hieronder in commentaar]{The expression for $G_{\epsilon}(e^{j\omega T_s})$ is as in \eqref{eq:BlaLFR} with $\alpha$ replaced by $\gamma$.}
	
		The gains $\alpha$ and $\gamma$ \M[depend not only]{do not only depend} on the system characteristics, but they can also depend on the variance, power spectrum, and the offset (DC value) of the input signal.
	
		Note that in the LFR case both the poles and zeros of the $\epsilon$-approximation shift with a changing gain $\gamma$:
		\begin{align}
			G_{\epsilon}(q) = G^{[1]}(q) + \frac{\gamma G^{[2]}(q)G^{[3]}(q)}{1+\gamma G^{[4]}(q)}, 
		\end{align}
		where
		\begin{align}
			G^{[i]}(q) &= \frac{B^{[i]}(q)}{A^{[i]}(q)}.
		\end{align}
		\Koen[This results in the following expression,]{Introducing a common denominator results in an expression} containing a dependency on $\gamma$ in both numerator and denominator\Koen[:]{.}
        \Koen[vergelijking hieronder in commentaar]{}
		\KoenSpellCheck[This is discussed in further detail in \citep{Schoukens2015}.]{\citet{Schoukens2015} discuss this in further detail.}	
	
	\subsubsection*{Identification using the BLA}
		The identification approach for the LFR model that is presented in \citep{Vanbeylen2013} shares some of its aspects with the approaches that are explained above: it relies on at least two measured BLAs, and it needs to perform a scan over several possible solutions to allocate the dynamics over the different LTI-blocks in the model. The main aspects of the identification algorithm are described below.
		
		First, the BLA is estimated at two different setpoints of the system. This BLA is converted to linear state space models.
		
		Next, an equivalence transformation for each state space model is estimated by solving a nonsymmetric algebraic Riccati equation. The details of this equation can be found in \citep{Vanbeylen2013}. This results in different possible solutions. 
		
		Each solution is evaluated by constructing the corresponding LFR model. The model with the best fit is selected. This step is very similar to the pole-zero allocation scan during the Wiener-Hammerstein identification in Section~\ref{sec:Hammerstein}.
		
		A nonlinear optimization (see Section~\ref{sec:ML}) of all the parameters of the closed loop simple feedback structure together is applied such that a consistent estimate is obtained. \Koen{Note from Assumption~\ref{ass:Noise} that the feedback mechanism acts on the noise-free signal.}

\section{Pros and cons of the model structures} \label{sec:Overview}
The model structures and identification methods that are explained in the previous sections all have their pros and cons.
Some model structures have the ability to accurately model the input-output behavior of a large class of systems, but use a large number of parameters in return.
Some identification methods can easily generate starting values from a single experiment, others require computationally expensive steps on multiple data sets.
This section provides an overview of the model structures in terms of their descriptive power, number of parameters, and general difficulty \KoenSpellCheck[to estimate]{in estimating} the model.
These criteria are explained below. \Koen{Next, the different model structures are compared based on these criteria, and some general limitations of BLA-based block-oriented nonlinear modeling are discussed.}

\Koen{\subsection{Criteria for comparison}}
	In the context of this paper, a universal approximator is a model structure that can approximate arbitrarily well the input-output behavior of systems belonging to the \Koen{PISPO system class (see Section~\ref{sec:SystemClass})} in mean square sense.
Parallel Wiener and parallel Wiener-Hammerstein models are universal approximators without any constraint on the excitation class \citep{Schetzen2006}.
Other approximation properties of these model structures have been shown in the past \citep{Palm1979,Boyd1985,Korenberg1991}.

	A large number of independent parameters leads to a large variance on the estimated parameters \citep{Soderstrom1989} when a model is identified from noisy measurements.
Therefore, a model is preferably parameter-parsimonious, as decreasing the variability on the estimated parameters requires larger data sets, thereby increasing the computational time of the optimization step in Section~\ref{sec:ML}.

	The optimization step in Section~\ref{sec:ML} converges to a local optimum of the (non-convex) cost function. Good initial estimates are thus required. A consistent initial estimate is preferable, as in that case a good initial estimate is guaranteed as more estimation data are used.
\Koen[The optimization step itself is consistent for each of the considered model structures.]{As described in Section~\ref{sec:ML}, the nonlinear estimator is consistent under a limited set of assumptions for each of the considered model structures.}

	Some identification approaches require multiple data sets, e.g. BLA estimates at different setpoints, which is not always possible due to time constraints.

\Koen{\subsection{Comparison of model structures}}
	Table~\ref{tbl:overview} provides an overview of \M[these properties]{the criteria for comparison} as well as some general difficulties of the identification algorithms that are presented in the previous sections.

	Only the parallel Wiener, Wiener-Schetzen, and parallel Wiener-Hammerstein model structures are universal approximators. Nevertheless, these feedforward structures cannot capture chaotic behavior, sub-harmonics, or hysteresis, since this requires feedback. The Linear Fractional Representation (LFR) is the most general block-oriented structure that contains only one static nonlinear block. The Wiener, Hammerstein, Wiener-Hammerstein, simple feedback, and Wiener-Hammerstein feedback structures are thus all special cases of the LFR.

	The single branch models of Section~\ref{sec:Single} all have a low number of parameters, since these structures consist of a limited series connection of SISO blocks.
The parallel Hammerstein model structure consists of a number of parallel connections of Hammerstein models and thus still has a fairly low number of parameters.
The parallel Wiener and Wiener-Schetzen model structures both have a MISO static nonlinear block, which typically results in a high number of parameters. For the parallel Wiener structure, however, the proposed identification method results in a low number of parallel branches, i.e. a low number of inputs to the nonlinear block, and thus a medium number of parameters.
The parallel Wiener-Hammerstein structure has a MIMO static nonlinear block, resulting in a high number of parameters.
The feedback structures of Section~\ref{sec:Feedback} consist of a limited (at most five) SISO blocks, and thus have a low number of parameters. The LFR structure, however, is typically \KoenSpellCheck[parameterized]{parametrized} in state space form \citep{Vanbeylen2013}, which introduces at least $n^2$ redundant parameters, where $n$ is the model order. This is due to the existence of an $n \times n$ transformation matrix that leaves the input-output behavior unchanged. This transformation matrix should be nonsingular, but \M[otherwise has]{except for this constraint the} elements \M[that]{}can be freely chosen. The LFR structure can thus have a medium number of parameters, although the number of independent parameters is still low.

	Most of the proposed identification methods result in consistent initial estimates. This is not the case for the Hammerstein-Wiener identification method and the methods for the feedback structures of Section~\ref{sec:Feedback}, since these methods use the noise-corrupted output either in some of the regressors (Hammerstein-Wiener and simple feedback) or to reconstruct internal signals that constitute some of the regressors (Wiener-Hammerstein feedback and LFR). Moreover, an approximation to the BLA is made in the Hammerstein-Wiener structure and all feedback structures. This approximation can be avoided by replacing the BLA-framework by the theoretical framework of the $\epsilon$-approximation \citep{Schoukens2015}.

	The presented methods on model structures that include a Wiener-Hammerstein branch all involve a brute-force scan to separate the dynamics of the Wiener-Hammerstein branch over its two LTI blocks. A similar scan over the solutions of a nonsymmetric algebraic Riccati equation is performed in the LFR identification method. These scans are mildly computationally expensive if only one Wiener-Hammerstein branch is involved (Wiener-Hammerstein, Wiener-Hammerstein feedback, and LFR structures). The scan is computationally expensive for the parallel Wiener-Hammerstein structure, since there, a scan needs to be performed for each branch and each time a MIMO nonlinearity needs to be estimated.
    
    \Koen{The BIBO stability of the feedforward models is easy to check, since if all subblocks are BIBO stable, the overall model is stable, and the stability of LTI and SNL subblocks can be easily checked. For the feedback models, the stability checks are more involved (see Section~\ref{sec:Feedback}).}

\begin{table*}
\caption{Overview of the considered model structures and their identification using the BLA, as explained in Sections~\ref{sec:Single}--\ref{sec:Feedback}.
	Hammerstein is abbreviated as H, Wiener as W, and Linear Fractional Representation as LFR.
\label{tbl:overview}}
    \begin{tabular}{p{2.5cm} p{1cm} p{2cm} p{1.8cm} p{1.8cm} p{6.1cm}}
		\toprule
		Model structure	& Section 	& Universal approximator		& Number of parameters	& Consistent initial values	& General difficulties	\\
		\midrule
		Hammerstein (H)	& \textcolor{black}{\ref{sec:Hammerstein}}			& no			& low		& yes 		& / 						\\
		Wiener (W) 		& \textcolor{black}{\ref{sec:Hammerstein}}			& no			& low		& yes 		& / 						\\
        W-H 			& \textcolor{black}{\ref{sec:Hammerstein}}			& no			& low		& yes 		& mildly computationally expensive brute-force scan	\\
		H-W 			& \textcolor{black}{\ref{sec:HammersteinWiener}}	& no			& low		& no$^1$ 	& assumes an invertible output nonlinearity, iterative procedure					\\
		Parallel H 		& \textcolor{black}{\ref{sec:ParallelHammerstein}}	& no			& low/medium & yes 		& BLA needed at different setpoints 			\\
		Parallel W 		& \textcolor{black}{\ref{sec:ParallelHammerstein}}	& yes/no$^2$	& medium	& yes 		& BLA needed at different setpoints 			\\
		Wiener-Schetzen	& \textcolor{black}{\ref{sec:WienerSchetzen}}		& yes			& high$^3$	& yes 		& \textcolor{black}{number of parameters strongly depends on the quality of the selected basis functions} 						\\
		Parallel W-H 	& \textcolor{black}{\ref{sec:ParallelWienerHammerstein}} & yes/no$^2$	& high$^4$	& yes 	& BLA needed at different setpoints, computationally expensive brute-force scan 	\\
		Simple feedback & \textcolor{black}{\ref{sec:SimpleFB}}			& no			& low		& no 		& \textcolor{black}{stability nonlinear feedback hard to check} 						\\
		W-H feedback 	& \textcolor{black}{\ref{sec:WienerHammersteinFB}}	& no			& low		& no 		& BLA needed at different setpoints, mildly computationally expensive brute-force scan\Koen{, stability nonlinear feedback hard to check}					\\
		LFR 			& \textcolor{black}{\ref{sec:LFR}}					& no			& low/medium & no 		& BLA needed at different setpoints,			mildly computationally expensive scan over solutions of a NARE$^5$\Koen{, stability nonlinear feedback hard to check}			\\
		\bottomrule
	\end{tabular}
	\Maarten[]{$^1$: only consistent in absence of noise and model errors. 
	$^2$: although the model structure is a universal approximator, the proposed identification methods cannot deal with the general class of Volterra systems; some limitations apply. 
	$^3$: multiple input single output polynomial. 
	$^4$: multiple input multiple output static nonlinearity. 
	$^5$: nonsymmetric algebraic Riccati equation.}
\end{table*}

\Koen{
\subsection{Limitations BLA-based block-oriented nonlinear modeling}

From the comparison of the models in Table~\ref{tbl:overview}, it can be observed that the initialization can be cumbersome for the more complex model structures, while the flexibility of the \KoenSpellCheck[more simple]{simpler} models can be limited.

The BLA-based block-oriented approaches presented \M[here]{in this paper} are multistage approaches. In a first stage, the BLA is determined, and in one or more later stages, the full nonlinear model is estimated.
\Maarten[]{Very noisy data, or strongly nonlinear systems (e.g. a hard saturation nonlinearity) can result in a poor BLA estimate. \Koen[This BLA forms the cornerstone of the algorithms that are presented in this paper.]{} It is more likely that the nonlinear optimization will end up in a local minimum of the cost function in the noisy or very nonlinear case. However, we also know that more data will eventually result in better estimates since most of the initialization algorithms that are presented in this paper (and also the BLA estimator itself \citep{Pintelon2012}) are consistent estimators. Therefore, we advice the user to collect more data, if possible, in \Koen[the]{} case the measurements are (very) noisy or the system is strongly nonlinear. Using periodic excitations allows one to easily obtain an estimate of the level of the noise in the measurements, averaging over the periods allows one to decrease the influence of the noise on the measurements.}

A particular case is the presence of an even nonlinearity. As pointed out in Section~\ref{sec:BLA}, the BLA is then zero. The quality of the estimate of the linear block(s), on which the presented methods rely, is typically poor in this case.
An alternative for the BLA-based approach that overcomes this issue with an even nonlinearity is the weighted principal component analysis (PCA) method as reported \KoenSpellCheck[in]{by} \citet{Zhang2015} for the identification of Wiener systems.

}

\section{Guidelines for the User} \label{sec:Guidelines}

	Two important steps in the system identification process are the input design and model structure selection step. This section provides some user guidelines for \Koen[the input design and the model structure selection problem]{these two steps}.

	\subsection{Model Structure Selection}
	
		The use of the BLA framework is not limited to obtaining a block-oriented \Koen{nonlinear} model of a nonlinear system. The BLA framework also offers a powerful analysis tool that allows the user, when combined with a well thought out input design strategy, to acquire a lot of insight into the nonlinear system at very low cost. The questions \textit{\KoenSpellCheck["]{``}How nonlinear is the system?"}, \textit{\KoenSpellCheck["]{``}Do I need a nonlinear model?"}, and \textit{\KoenSpellCheck["]{``}What kind of nonlinear effects do I observe in the system?"} can be (partly) answered within the BLA framework \citep{Lauwers2008,Pintelon2012,Schoukens2012,Schoukens2015}, an overview can be found in \citep{Schoukens2016}.
		
		The BLA framework offers the user the possibility \KoenSpellCheck[to obtain]{of obtaining} both a nonparametric frequency domain estimate of the noise disturbance present at the output and \M{of} the system combined, and a nonparametric frequency domain estimate of the nonlinear disturbances generated by the system \citep{Pintelon2012,Schoukens2012}. These can be used to quantify the level of the nonlinearities in the system and \M{to determine} whether or not the nonlinearities or the noise are the dominant disturbance source. Hence, it also allows the user to make a choice on whether or not a nonlinear model should be estimated, based on a quantitative measure. \M[It could be that a linear model is accurate enough.]{In some cases, depending on the final goal, a linear model could be accurate enough.}
		
		Exciting the nonlinear system under test at different setpoints of the system (e.g. different input variances or mean values of the input signal) can be very useful to detect what kind of nonlinear effects are present in the system. It is presented in \citep{Schoukens2015} that a pole shift in the BLAs over the different setpoints indicates the presence of nonlinear feedback loops, while a zero shift indicates the presence of parallel nonlinear signal paths in the system. \Koen[It are exactly these]{These} pole and zero shifts \Koen[that]{}are exploited by some of the presented identification algorithms to generate initial estimates of the linear dynamic blocks of the selected block-oriented structure. The BLA-based model structure selection approach can be combined with other model structure selection approaches, e.g. \citep{Haber1990,Pearson2003}, to guide the user to a suitable model structure.

		It is often not possible to perform an elaborate measurement campaign on the nonlinear system that needs to be modeled. A thorough system analysis and model structure selection step are not always possible under these circumstances. Fortunately, most of the presented modeling approaches are relatively simple and can be applied at a fairly low computational cost. This allows the user to try out multiple model structures and compare the quality of the resulting model on a validation data set.
	
	\subsection{Input Design}

		A model is in many practical cases only an approximation of the system under study. Therefore, it is of great importance to use input signals that represent a realistic set of excitations to the system. The system should be excited within a realistic range of operation, both in amplitude and frequency. Also the amplitude distribution (e.g. the crest factor of the input) can play an important role. These considerations often exclude relatively simple excitation signals such as sine inputs and impulses.

		The BLA framework prefers, but is not limited to, different realizations of periodic, random signals to perform a nonparametric analysis of the nonlinear system under test. An increase in the number of signal realizations decreases the variance on the BLA estimate due to both the nonlinear and noise distortion, while an increase of the number of periods only decreases the variance due to the noise distortion. A typical choice would be to use 7 different realizations and 2 (steady state) periods of a periodic, random signal \citep{Pintelon2012,Schoukens2012}. Some of the presented identification algorithms require the BLA to be measured at different setpoints of the system. \KoenSpellCheck[An experimental algorithm is introduced in \citep{Esfahani2016} to find a set of input signal rms-values and DC offsets such that the distortion on the BLA is minimized.]{\citet{Esfahani2016} introduce an experimental algorithm to find a set of input signal rms-values and DC offsets such that the distortion on the BLA is minimized.}

		Multisine signals (and more specifically random phase multisines) are excellent candidate input signals since they are periodic and offer a full control of the amplitude spectrum of the input signal \citep{Pintelon2012}. Small multisine signals can be \M[superposed]{superimposed} on a more realistic input signal for the system under test to obtain a random input signal \M[containing]{with} a rich frequency content that can be used for a nonparametric analysis of the nonlinear system under test. An example of this can be found in \citep{Widanage2011,Marconato2014}.

		The identification algorithms that are presented in the previous sections make use of the Riemann equivalence class of asymptotically normally distributed excitation signals $\mathbb{S}_{U}$ (including random phase multisine signals). This could be seen as a limitation of the identification algorithms. However, the presented algorithms are quite robust to deviations of this signal class due to the final nonlinear optimization step described in Section~\ref{sec:ML}.

\section{Benchmark and Practical Results} \label{sec:Benchmark}
	This section provides an overview of the benchmark setups and real-life systems on which the BLA-based block-oriented identification methods have been applied \Koen{as well as the lessons learned from them}. These results illustrate the good performance and the practical usefulness of the BLA-based block-oriented nonlinear modeling approaches.
	
	The Wiener-Hammerstein identification techniques in Section~\ref{sec:Hammerstein} obtained excellent results on the Wiener-Hammerstein benchmark \citep{Schoukens2009a,Hjalmarsson2012,Sjoberg2012a}. The simple feedback structure and the Wiener-Hammerstein feedback structure have been successfully applied on an electronic simulator of a hardening spring available as the Silverbox benchmark \citep{Paduart2008,Wigren2013}. The coupled electric drives benchmark in \citep{Wigren2013} \Koen[are]{is} an example of a system that is difficult to identify using the BLA\Koen{-}based approaches due to the presence of the purely even nonlinearity (see Section~\ref{sec:BLA}).
	
	Other systems, not available as a benchmark, that have been successfully modeled by the block-oriented BLA-based approaches are: the insulin-glucose regulatory system (Wiener, Wiener-Schetzen and LFR models) \citep{Marconato2014,Vanbeylen2014a}, a	RF crystal detector (Wiener-Hammerstein feedback model) \citep{Schoukens2008}, a logarithmic amplifier (Wiener and parallel Wiener models) \citep{SchoukensM2012b}, a valve audio amplifier (Hammerstein and parallel Hammerstein models) \citep{SchoukensM2011} and a Doherty power amplifier (parallel Wiener-Hammerstein model) \citep{SchoukensM2015c}. \KoenSpellCheck[\Koen{Applications in biomedical engineering and physiology using related correlation-based approaches are reported in \cite{Westwick2003}.}]{\citet{Westwick2003} report applications in biomedical engineering and physiology using related correlation-based approaches.}
    
    \Koen{From these and other practical results, it was found that including structural prior knowledge leads to a better trade-off between modeling performance and parameter parsimony \citep{Marconato2013,Vanbeylen2014a}. Nevertheless, the selection of the model structure is still a user choice in which the intended application of the model plays a role. If the model will be used for control purposes for example, it makes sense to restrict the model class to invertible models \citep{Pajunen1992,Gaasbeek2016}. If the model needs to be re-estimated often on new datasets, it makes sense to use a model structure that can be easily identified and that still has an acceptable error performance, even though it is known that the model is too simple \citep{Rebillat2016}.}
	
\section{Future Research Directions} \label{sec:futureWork}

	This overview paper surveyed a quite broad range of block-oriented identification approaches, with the emphasis on methods starting from linear approximations of the nonlinear system, for a wide range of block-oriented structures. It is also clear from the surveyed work that there are quite some open problems in the identification of block-oriented \Koen{nonlinear} systems. Some of these open problems are listed in this section. \Koen{Quite some of these open problems are already addressed to some extent, but the full picture is not \M{yet} complete.}

	A first open research question is how to include dynamic nonlinearities in the block-oriented \Koen{nonlinear} modeling toolbox. Quite some research has already been performed on this topic, for backlash, backlash-inverse and hysteresis nonlinearities \citep{Rochdi2010,Giri2011,Wang2012,Giri2013,Giri2014,Brouri2014b}. First steps to include dynamic nonlinearities in a BLA-based identification framework are made in \citep{Yong2015} for Hammerstein and Wiener model structures. Most of the methods that are developed until now consider Hammerstein and Wiener structures only, and are often limited by a specific input signal class. Further research on this topic is encouraged through the hysteretic nonlinear system identification benchmark presented in \citep{Noel2016}.
	
	The extension of the available identification methods to more complex structures is another open research question. This includes for instance the development of flexible identification methods for complex, user-generated and problem-dependent block-oriented structures and the extension of the existing block-oriented structures towards a MIMO setting without losing the structured nature of a block-oriented \Koen{nonlinear} model. Some efforts have already been made to identify MIMO block-oriented structures, both for the \KoenSpellCheck[more simple]{simpler} Hammerstein and Wiener block-oriented structures \citep{Verhaegen1996,Goethals2005b,Westwick1996,Janczak2007}, as for the more involved Wiener-Hammerstein, parallel Hammerstein and Wiener-Schetzen model structures \citep{Boutayeb1995,Katayama2016,SchoukensM2011a,Tiels2012}. These results could serve as a starting point for the development of other MIMO block-oriented identification algorithms.
	
	Most block-oriented identification methods assume the presence of an output-error noise framework. This is often a simplified representation of reality, as discussed in Section~\ref{sec:Structures}. Developing identification methods that are able to deal with more complex noise frameworks is an ongoing research effort (see for instance \citep{Hagenblad2008,Wills2013,Lindsten2013,Wahlberg2014,Mu2014} and also the results on the benchmark presented in \citep{SchoukensM2016a}). 
	
	A block-oriented \Koen{nonlinear} model is often an approximation of the true system behavior. How to deal with model errors in nonlinear system identification, and characterizing these model errors is an open research problem. The results of the BLA framework \citep{Pintelon2012} could help as a guideline in how to develop a nonlinear system identification framework dealing with model errors.
	
	In many practical cases, the user has some prior knowledge about the system behavior. This can be information about the system structure and the type of the static nonlinearity (saturation, discontinuity, smoothness), but also about the location of some poles and zeros and the stable nature of the system. Ideally, such prior knowledge should be taken into account during the identification of the system. The recently developed regularization\KoenSpellCheck{-}based identification methods allow \M{one} to include such prior knowledge (as is illustrated in \citep{Risuleo2015} for the Hammerstein model structure and \citep{Birpoutsoukis2015} for Volterra kernel estimation). \M[and the]{The} work presented in \citep{Tiels2014b} \M[present]{proposes} an approach to include prior pole-zero information during the identification for Wiener-Schetzen model structures.
	
	The input of a nonlinear system is not always controlled by the user (e.g. when \Koen[we are]{} dealing with rainfall input), or it can be very costly to collect a high number of data points. The identification of poorly excited systems, or the identification of systems with short data records is an open problem (see also the benchmark presented in \citep{SchoukensM2016b}). Regularization\KoenSpellCheck{-}based identification methods are a possible solution to this problem (e.g. \citep{Risuleo2015}). However, more research has to be performed to adapt these approaches for nonlinear system identification purposes.
	
	Finally, an often neglected or underestimated problem in the development of a nonlinear system identification algorithm is its user friendliness. Most users are not \Koen[an expert]{experts} in (nonlinear) system identification. Therefore, it is of great importance to develop identification algorithms for complex block-oriented structures that are robust with respect to the user choices, require as little user interaction as possible, and have little restrictions on the data used for identification. Such identification algorithms start to be available for the \KoenSpellCheck[more simple]{simpler} block-oriented structures (e.g. Hammerstein and Wiener), but this is typically not yet the case for the more complex block-oriented structures (e.g. parallel Wiener-Hammerstein, Wiener-Hammerstein feedback, LFR). Developing algorithms that have less restrictions on the estimation data, and that are more user friendly in general could increase the applicability of such model structures significantly.

\section{Conclusion} \label{sec:conclusion}
	This paper \Koen[gives an overview of the]{surveys} different block-oriented \Koen{nonlinear} systems that can be identified based on the best linear approximation, and it gives an overview of the identification algorithms that have been developed in the past. It turns out that a wide range of systems can be modeled using the best linear approximation approaches: single branch systems, parallel branch systems and nonlinear feedback systems. Furthermore, some future and ongoing research directions are discussed.
	
	The best linear approximation framework allows the user to extract important information about the system (e.g. model order of the dynamics, system structure information), it guides the user in taking good modeling decisions, and it proves to be a good starting point for nonlinear system identification algorithms for a wide range of block-oriented \Koen{nonlinear} model structures. The best linear approximation provides in some cases a direct estimate of the linear dynamic block of the system. In other cases the best linear approximation can be combined with a pole-zero allocation scan, a singular value decomposition, orthogonal basis functions or a Riccati-equation\KoenSpellCheck{-}based approach to identify the nonlinear system under test.

\begin{ack} 
	 This work was supported in part by the research council of the VUB (OZR bridging mandate), by the Belgian Government through the Inter university Poles of Attraction IAP VII/19 DYSCO program, and the ERC advanced grant SNLSID, under contract 320378.
\end{ack}	

\bibliographystyle{model5-names} 													 
\bibliography{ReferencesLibraryV2}    
		
\end{document}